\documentclass[onecolumn,amsmath,amssymb,10pt,aps]{revtex4}
  
\usepackage[utf8]{inputenc}
\usepackage[T1]{fontenc}

\usepackage{amsmath}
\usepackage{amsfonts}
\usepackage{graphicx}
\usepackage{yfonts}
\usepackage{color}
\usepackage[normalem]{ulem}
\usepackage{amsthm}
\usepackage{bm}
\usepackage{bbm}
\usepackage{mathtools}
\usepackage{array}
\usepackage{placeins}
\usepackage{enumitem}

\newcommand\bigforall{\mbox{\Large $\mathsurround=1pt\forall$}}

\def\<{\langle}
\def\>{\rangle}
\newcommand{\tr}{\mathrm{Tr}}
\newcommand{\Tr}{\mathrm{Tr}}
\def\oper{{\mathchoice{\rm 1\mskip-4mu l}{\rm 1\mskip-4mu l}
{\rm 1\mskip-4.5mu l}{\rm 1\mskip-5mu l}}}
\DeclareMathAlphabet\mathbfcal{OMS}{cmsy}{b}{n}

\mathchardef\mhyphen="2D 

\newtheorem{Theorem}{Theorem}

\newtheorem{Definition}{Definition}
\newtheorem{Remark}{Remark}
\newtheorem{Proposition}{Proposition}
\newtheorem{Example}{Example}

\begin{document}

\title{Equiangular weighted frames and conical 2-designs with unequal traces}

\author{Katarzyna Siudzi\'{n}ska\footnote{e-mail: kasias@umk.pl}}
\affiliation{Institute of Physics, Faculty of Physics, Astronomy and Informatics, Nicolaus Copernicus University in Toru\'{n}, ul.~Grudzi\k{a}dzka 5, 87--100 Toru\'{n}, Poland}

\begin{abstract}
Conical 2-designs are families of positive operators possessing crucial symmetry properties that allow them to efficiently characterize important quantum measures. However, little is known about their general structures and properties. We propose a construction method of conical 2-designs whose elements are not of equal traces, based on a generalization of equiangular weighted frames to non-projective operators. We provide a detailed analysis of their properties, relations to quantum measurements, and applications, together with examples that go beyond the known classes. This constitutes a major step toward a full characterization of informationally overcomplete quantum measurements.
\end{abstract}

\flushbottom

\maketitle

\thispagestyle{empty}

\section{Introduction}

Recently, much attention has been given to positive, operator-valued measures (POVMs), which describe the measurement process in quantum information theory. They are represented by collections of positive, semi-definite operators on the Hilbert space that provide a resolution of the identity operator. Special interest is focused on symmetric classes of POVMs, whose properties allow them to find important applications e.g. in entanglement detection \cite{SICMUB_preprint,multi_SM,Rastegin_EW,SM_Pmaps}, quantum steering \cite{Certify_EPR,Certify_EPR2,steering_entropic}, and quantum tomography \cite{Nguyen,Innocenti}. On one hand, there are equioverlapping measurements \cite{EOM22,EOM24,EOM25,EOMq3}, which generalize symmetric, informationally complete (SIC) POVMs \cite{Renes} and semi-SIC POVMs \cite{semi-SIC} to projective operators of distinct traces. On the other, families of mutually unbiased equiangular measurements \cite{GEAM,SIC-MUB_general,SIC-MUB} introduce informationally overcomplete POVMs that unify the framework for SIC POVMs and mutually unbiased bases (MUBs) \cite{Schwinger,Szarek}. The motivation behind these constructions is to characterize quantum measurements that perform better under certain tasks than SIC POVMs and MUBs, while also replacing them in the dimensions where they cannot be constructed.

A different approach to classify symmetric semi-positive operators has been proposed in ref. \cite{Graydon}. There, the authors introduce conical 2-designs as a generalization of complex projective 2-designs to semi-positive operators of arbitrary rank. Their recent applications include entropic uncertainty relations \cite{c2d_uncertainty,c2d_uncertainty2} and quantum coherence \cite{c2d_coherence}. While both SIC POVMs and MUBs are conical (and projective) 2-designs, this is not true for all generalized equiangular measurements. It has been shown that only equidistant GEAMs form conical 2-design \cite{GEAM}. Interestingly, only for GEAMs that are conical 2-designs, only two real parameters are needed to fully characterize important quantum measures, like quantum coherence, quantum concurrence, and quantum entanglement \cite{GEAM_coherence}. Equivalence relations between classes of generalized equiangular measurements and conical 2-designs have been examined in ref. \cite{conical}. It turns out that there exist even more general conical 2-designs, as in general all operators can have distinct traces.

Here, we analyze a class of conical 2-designs that goes beyond generalized equiangular measurements and mutually unbiased generalized equiangular tight frames (MU GETFs). Namely, we drop the condition of the frames being tight and instead consider generalized weighted frames with weights assigned to operator traces. This allows us to define a wide class of mutually unbiased (MU) generalized equiangular weighted frames (GEWFs), where all operators can possess distinct traces. We analyze their properties, like linear independence or informational completeness, and present a general construction method. Our discussion is illustrated with low-dimensional examples that go beyond the known operator classes. We then show how MU GEWFs relate to POVMs, conical 2-designs, and MU GETFs. We end the discussion with applications in the index of coincidence, variance, and entanglement detection.

The paper is organized as follows. In Section 2, we introduce a generalization of weighted tight frames (GEWFs) from rank-1 projectors to positive, semi-definite linear operators. Assuming that the weights are given by operator traces, we derive their trace relations and show that all GEWF elements are linearly independent. Next, we propose four general methods of their construction from an orthonormal Hermitian operator basis. In Section 3, we extend the notion of GEWFs to mutually unbiased (MU) collections of GEWFs. We derive conditions under which MU GEWFs form an informationally overcomplete set. We analyze the multiple ways to construct them from an orthonormal Hermitian operator basis, together with the relations between the operators from different constructions. Section 4 shows how MU GEWFs relate to MU GETFs and generalized equiangular measurements. Next, Section 5 establishes the conditions under which MU GEWFs are conical 2-designs. Together with the results from ref. \cite{conical}, it also formulates an equivalence relation between subclasses of MU GEWFs and conical 2-designs. In Section 6, we show how to use any MU GEWFs to define a POVM. Finally, Section 7 lists important applications of MU GEWFs. We prove that the index of coincidence, variance, and Schmidt number criterion can be characterized using two positive constants that define a conical 2-design. The most important results and open questions are then summarized in Section 8.

\section{Equiangular weighted frames}

Consider the Hilbert space $\mathcal{H}\simeq\mathbb{C}^d$ associated with a $d$-dimensional (qudit) quantum system. In the space of operators on $\mathbb{C}^d$, pure quantum states are represented via rank-1 projectors $|\psi_k\>\<\psi_k|$ onto the normalized vectors $\psi_k\in\mathbb{C}^d$. If we allow for the multiplication of $|\psi_k\>\<\psi_k|$ by non-zero real factors $\alpha_k$, then each rank-1 projector turns into a line $\{R_k=\alpha_k|\psi_k\>\<\psi_k|;\,\alpha_k\in\mathbb{R}/\{0\}$. Much interest has been given to the collections $\{R_k;\,k=1,\ldots,M\}$ of lines that are equiangular, which means they satisfy the condition $\Tr(R_kR_\ell)=y\Tr(R_k)\Tr(R_\ell)$ for any $k\neq\ell$. However, unlike in ref. \cite{GEAM}, we drop the requirement for $R_k$ to be uniform, allowing for $\alpha_k\neq \alpha_\ell$. Now, equiangular lines such that $\sum_{k=1}^Mw_kR_k=\gamma\mathbb{I}_d$, $\gamma>0$, define a weighted tight frame \cite{weighted,weighted2}. In further considerations, we limit ourselves to weights given by operator traces, $w_k=\Tr(R_k)$. In addition, we drop the requirement that $R_k$ are rank-1, so that they are instead positive semi-definite operators of arbitrary rank. Let us summarize this construction in a definition.

\begin{Definition}
The generalized equiangular weighted frame (GEWF) is a set $\mathcal{R}=\{R_k:\,k=1,\ldots,M\}$ of $M$ semi-positive operators $R_k$ on $\mathbb{C}^d$, $2\leq M\leq d^2$, such that
\begin{equation}\label{def}
\begin{split}
\Tr(R_k)&=w_k,\\
\Tr(R_k^2)&=x_k,\\
\Tr(R_kR_\ell)&=yw_kw_\ell,\quad \ell\neq k,\\
\sum_{k=1}^Mw_kR_k&=\gamma\mathbb{I}_d,\quad \gamma>0.
\end{split}
\end{equation}
\end{Definition}

From this definition, it follows that $w_k>0$,
\begin{equation}\label{deff}
x_k=c+\frac{\gamma-c}{d\gamma}w_k^2,\qquad y=\frac{\gamma-c}{d\gamma},
\end{equation}
and
\begin{equation}\label{war1}
0<c\leq\gamma=\frac 1d \sum_{k=1}^M w_k^2,
\end{equation}
where $c=0$ is excluded because then $R_k$ is proportional to the identity operator. The upper bound $c=\gamma$ is reached only when $w_k=1$, for which $\gamma=M/d$ and $M\leq d$. Additionally, from $\Tr(R_k^2)\leq[\Tr(R_k)]^2$, one has
\begin{equation}\label{war2}
w_k\geq\sqrt{\frac{d\gamma c}{(d-1)\gamma+c}},
\end{equation}
which also implies an additional constraint on the parameter $c$,
\begin{equation}\label{c2}
c\leq \frac{d-1}{M-1}\gamma.
\end{equation}
Note that the formula for $\Tr(R_k^2)$ always contains a free term that is not multiplied by $[\Tr(R_k)]^2$. 
Moreover, the elements of $\mathcal{R}$ are equiangular, but they form a tight frame if and only if $w_k=w_\ell$ for all $k\neq\ell$.

\begin{Proposition}
The operators $R_k$, $k=1,\ldots,M\leq d^2$, are linearly independent.
\end{Proposition}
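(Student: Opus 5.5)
Suppose $\sum_{k=1}^M a_kR_k=0$ for some real (or complex) coefficients $a_k$. The plan is to show all $a_k$ vanish by computing the Gram-type relations directly from the definition \eqref{def}, using the parameters in \eqref{deff}. Multiply the relation by $R_\ell$ and take the trace. Since $\Tr(R_\ell^2)=x_\ell=c+y w_\ell^2$ and $\Tr(R_kR_\ell)=yw_kw_\ell$ for $k\neq\ell$, one obtains for every $\ell$
\begin{equation*}
0=\sum_{k=1}^M a_k\Tr(R_kR_\ell)=c\,a_\ell+y\,w_\ell\sum_{k=1}^M a_kw_k .
\end{equation*}
Thus the Gram matrix of $\{R_k\}$ has the form $G=c\,\mathbb{I}_M+y\,ww^T$ with $w=(w_1,\ldots,w_M)^T$, and linear independence is equivalent to $G$ being nonsingular.

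Set $S=\sum_k a_kw_k$, so that $a_\ell=-(y/c)w_\ell S$, where $c>0$ by \eqref{war1}. Multiplying by $w_\ell$ and summing over $\ell$ gives
\begin{equation*}
S\Bigl(1+\frac{y}{c}\sum_{\ell=1}^M w_\ell^2\Bigr)=0 .
\end{equation*}
By \eqref{war1}, $\sum_\ell w_\ell^2=d\gamma$, and by \eqref{deff}, $y=(\gamma-c)/(d\gamma)$. Hence the bracket equals $1+(\gamma-c)/c=\gamma/c>0$. It follows that $S=0$, and then $a_\ell=0$ for all $\ell$.

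Alternatively, one can take the trace of the relation to get $\sum_k a_kw_k=0$ directly, and then use the trace-with-$R_\ell$ equations to get $c\,a_\ell=0$. The only step needing care is the sign and nonvanishing of the eigenvalue $c+y\|w\|^2=\gamma$ of $G$ along $w$. This is immediate once \eqref{war1} and \eqref{deff} are used, since $c>0$ and $\gamma>0$.
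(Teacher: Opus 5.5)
Your proof is correct and is essentially the paper's argument. The paper also traces $\sum_k r_kR_k=0$ against $\mathbb{I}_d$ and against each $R_\ell$, obtaining $\sum_k r_kw_k=0$ and $c\,r_\ell+\frac{\gamma-c}{d\gamma}w_\ell\sum_k r_kw_k=0$; this is exactly your alternative route, and your main route via the Gram eigenvalue $c+y\sum_\ell w_\ell^2=\gamma$ is equivalent because $\sum_\ell w_\ell R_\ell=\gamma\mathbb{I}_d$, with $c>0$ from eq.~(\ref{war1}) the only extra input in both versions.
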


\begin{proof}
Let us introduce an auxiliary operator
\begin{equation}
R=\sum_{k=1}^Mr_kR_k,\qquad r_k\in\mathbb{R},
\end{equation}
and assume that $R=0$. From eq. (\ref{def}), we find that
\begin{equation}
\Tr(R)=\sum_{k=1}^Mr_kw_k=0,\qquad \Tr(RR_\ell)=\sum_{k=1}^Mr_k\frac{\gamma-c}{d\gamma}w_kw_\ell+cr_\ell=0.
\end{equation}
Therefore, $\sum_{k=1}^Mr_kR_k=0$ implies that all $r_k=0$,
which means that $R_k$ are indeed linearly independent.
\end{proof}

From the above proposition, we see that if $M=d^2$, then the GEWF elements $R_k$, $k=1,\ldots,M$, form an informationally complete set.

Now, to construct $R_k$ in any dimension $d$, take a subset $\mathcal{G}=\{G_k:\,k=1,\ldots,M\}$ of $M$ operators $G_k$ from an orthonormal Hermitian operator basis $\{G_k:\,k=0,\ldots,d^2-1\}$ with $G_0=\mathbb{I}_d/\sqrt{d}$ and traceless $G_k$, $k=1,\ldots,d^2-1$. Then, one has two possible results: $\{R_k\}=\{R_k^{(+)}\}$ or $\{R_k\}=\{R_k^{(-)}\}$, where
\begin{equation}\label{gen}
R_k^{(\pm)}=\frac{w_k}{d}\left(\mathbb{I}_d\pm\sqrt{\frac{dc}{\gamma}}H_k\right).
\end{equation}
The operators $H_k$ are linear combinations of $G_\ell$. It remains to find the expansion coefficients.

\begin{Proposition}\label{4Hk}
The traceless operators $H_k$ in eq. (\ref{gen}) are given by
\begin{equation}
(i)\quad\left\{
\begin{split}
H_k&=\frac{\sqrt{d\gamma}}{w_k}G_k-\sum_{\ell=1}^{M-1}\frac{w_\ell}{\sqrt{d\gamma}-w_M} G_\ell,\\
H_M&=\sum_{\ell=1}^{M-1}\frac{w_\ell}{w_M} G_\ell,
\end{split}
\right.\qquad{\rm or}\qquad(ii)\quad
\left\{
\begin{split}
H_k^\prime&=\frac{\sqrt{d\gamma}}{w_k}G_k-\sum_{\ell=1}^{M-1}\frac{w_\ell}{\sqrt{d\gamma}+w_M} G_\ell,\\
H_M^\prime&=-\sum_{\ell=1}^{M-1}\frac{w_\ell}{w_M} G_\ell,
\end{split}
\right.
\end{equation}
where $k=1,\ldots,M-1$.
\end{Proposition}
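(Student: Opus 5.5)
The plan is to substitute the ansatz (\ref{gen}) into the defining conditions (\ref{def}) and translate them into conditions on the traceless Hermitian operators $H_k$ alone. Then I would check that both families $(i)$ and $(ii)$ satisfy these conditions.

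\textbf{Reducing to conditions on $H_k$.} Write $s=\pm\sqrt{dc/\gamma}$. Since $\Tr H_k=0$, we get $\Tr R_k=w_k$ automatically. Expanding the products gives
\begin{equation*}
\Tr(R_kR_\ell)=\frac{w_kw_\ell}{d^2}\left(d+s^2\Tr(H_kH_\ell)\right).
\end{equation*}
Comparing with $yw_kw_\ell$ and with $x_k$ from (\ref{deff}) yields
\begin{equation*}
\Tr(H_kH_\ell)=\frac{d\gamma}{w_k^2}\delta_{k\ell}-1 .
\end{equation*}
The frame condition $\sum_kw_kR_k=\gamma\mathbb{I}_d$ becomes, after using $\gamma=\frac1d\sum_kw_k^2$ from (\ref{war1}),
\begin{equation*}
\sum_{k=1}^M w_k^2H_k=0 .
\end{equation*}
It is convenient to rescale to $K_k=w_kH_k$. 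The conditions are then
\begin{equation*}
\Tr(K_kK_\ell)=d\gamma\,\delta_{k\ell}-w_kw_\ell,\qquad \sum_{k=1}^M w_kK_k=0 .
\end{equation*}
So the Gram matrix of the $K_k$ must be $d\gamma\mathbb{I}_M-ww^T$, where $w=(w_1,\ldots,w_M)^T$. Because $|w|^2=d\gamma$, this is $d\gamma$ times the orthogonal projector onto $w^\perp$. It has rank $M-1$, and its kernel $w$ is exactly the linear relation $\sum_kw_kK_k=0$. This explains why only $M-1$ basis operators $G_1,\ldots,G_{M-1}$ are needed, and why the $K_k$ are determined up to an orthogonal transformation of their span.

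\textbf{Verifying $(i)$.} Set $V=\sum_{\ell=1}^{M-1}w_\ell G_\ell$. Orthonormality of the $G_\ell$ gives $\Tr V^2=dγ-w_M^2$ and $\Tr(G_kV)=w_k$. Also $w_M<\sqrt{d\gamma}$ since $M\geq2$, so the denominators $\sqrt{d\gamma}\mp w_M$ do not vanish. In case $(i)$,
\begin{equation*}
K_k=\sqrt{d\gamma}\,G_k-\frac{w_k}{\sqrt{d\gamma}-w_M}V\quad(k<M),\qquad K_M=V .
\end{equation*}
I would check the required relations in order.
\begin{itemize}
\item The linear relation: $\sum_kw_kK_k=\left(\sqrt{d\gamma}-\frac{d\gamma-w_M^2}{\sqrt{d\gamma}-w_M}+w_M\right)V=0$.
\item The norm of $K_M$: $\Tr K_M^2=d\gamma-w_M^2$.
\item The cross terms with $K_M$: for $k<M$, $\Tr(K_kK_M)=\sqrt{d\gamma}\,w_k-w_k(\sqrt{d\gamma}+w_M)=-w_kw_M$.
\item The block $k,\ell<M$: here the $G$-terms give $d\gamma\delta_{k\ell}$. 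The remaining terms combine to
\begin{equation*}
w_kw_\ell\left(-\frac{2\sqrt{d\gamma}}{\sqrt{d\gamma}-w_M}+\frac{d\gamma-w_M^2}{(\sqrt{d\gamma}-w_M)^2}\right)=w_kw_\ell\,\frac{-2\sqrt{d\gamma}+\sqrt{d\gamma}+w_M}{\sqrt{d\gamma}-w_M}=-w_kw_\ell .
\end{equation*}
\end{itemize}

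\textbf{Case $(ii)$ and the main difficulty.} Case $(ii)$ is the same computation with $w_M\to-w_M$, that is, $K_M=-V$. Only $w_M^2$ and the combination $w_MK_M$ enter the identities, so they go through unchanged. Dividing by $w_k$ returns the stated $H_k$ and $H_k'$. The main obstacle is this $k,\ell<M$ block, where the identity $\sum_{\ell<M}w_\ell^2=d\gamma-w_M^2$ must be used to factor $(\sqrt{d\gamma}-w_M)(\sqrt{d\gamma}+w_M)$. Finally, I would remark that the statement fixes a form of $H_k$ rather than guaranteeing positivity of $R_k$. Positivity is the separate constraint on $c$ already discussed after (\ref{war1}) and (\ref{c2}).
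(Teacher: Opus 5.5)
Your proof is correct, but it runs in the opposite direction from the paper's.

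The paper starts from the ansatz $H_k=\xi_kG_k-\sum_{\ell<M}\mu_\ell G_\ell$, $H_M=\sum_{\ell<M}\nu_\ell G_\ell$, imposes the same three conditions you derive, and solves for the coefficients:
\begin{itemize}
\item the frame condition fixes $\nu_k$;
\item the trace conditions reduce to a common value $A=\mu_k\xi_k$ with $\xi_k=\sqrt{d\gamma}/w_k$;
\item the two roots $A=\sqrt{d\gamma}/(\sqrt{d\gamma}\pm w_M)$ of a quadratic give families $(i)$ and $(ii)$, and the leftover global sign is absorbed by $G_k\to-G_k$.
\end{itemize}
That route explains where the formulas come from. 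It also shows that, inside the ansatz, they are essentially the only solutions, although the step making $\mu_\ell\xi_\ell$ independent of $\ell$ is only forced when $M\geq4$.

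You instead rescale to $K_k=w_kH_k$ and check the given formulas directly. You use only $\Tr V^2=d\gamma-w_M^2$ and $\Tr(G_kV)=w_k$. This is shorter and works uniformly for all $M\geq2$. Your substitution $w_M\to-w_M$ for case $(ii)$ is legitimate precisely because those are the only relations the computation uses.

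Your Gram-matrix observation is something the paper lacks. Since the Gram matrix of the $K_k$ must be $d\gamma$ times the projector onto $w^\perp$, the relation $\sum_kw_k^2H_k=0$ already follows from the trace conditions. Your separate check of the linear relation is therefore redundant. It also means any admissible family is an isometric image of family $(i)$.

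Pushed one step further, take any admissible family and set $G_k=\bigl(K_k+\tfrac{w_k}{\sqrt{d\gamma}-w_M}K_M\bigr)/\sqrt{d\gamma}$. This gives an orthonormal traceless Hermitian set for which that family has exactly the form $(i)$. So, once the choice of $G_\ell$ may vary, no admissible family lies outside the construction. This is the content behind the paper's later inversion formula for $G_k$. It also settles the paper's caveat that other constructions of $H_k$ might exist.
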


\begin{proof}
Assume that $H_k$ are linear combinations of $G_\ell$ with the following choice of real coefficients,
\begin{equation}\label{Hkgen}
\begin{split}
H_k&=\xi_kG_k-\sum_{\ell=1}^{M-1}\mu_\ell G_\ell,\qquad k=1,\ldots,M-1,\\
H_M&=\sum_{\ell=1}^{M-1}\nu_\ell G_\ell.
\end{split}
\end{equation}
This is not the most general choice, and hence there may exist other constructions of $H_k$. From eqs. (\ref{def}) and (\ref{gen}), we have
\begin{equation}\label{cond2}
\Tr(H_k^2)=\frac{d\gamma}{w_k^2}-1,\qquad
\Tr(H_kH_\ell)=-1.
\end{equation}
Moreover, the condition $\sum_{k=1}^Mw_kR_k=\gamma\mathbb{I}_d$ reduces to
\begin{equation}\label{cond1}
\sum_{k=1}^{M}w_k^2H_k=0.
\end{equation}
It remains to choose the coefficients $\xi_k$, $\mu_k$, and $\nu_k$ in such a way that eqs. (\ref{cond2}) and (\ref{cond1}) always hold. Starting from eq. (\ref{cond1}), we immediately find that
\begin{equation}
\sum_{k=1}^{M-1}G_k\Big[-\mu_k(d\gamma-w_M^2)+w_k^2\xi_k+w_M^2\nu_k\Big]=0,
\end{equation}
and hence, due to linear independence of $G_k$,
\begin{equation}
\nu_k=\frac{1}{w_M^2}\Big[\mu_k(d\gamma-w_M^2)-w_k^2\xi_k\Big].
\end{equation}
Next, eq. (\ref{cond2}) impose the following constraints,
\begin{equation}\label{two}
\begin{split}
\frac{d\gamma}{w_k^2}-1&=\xi_k^2-2\mu_k\xi_k+\sum_{m=1}^{M-1}\mu_m^2,\\
-1&=-\mu_k\xi_k-\mu_\ell\xi_\ell+\sum_{m=1}^{M-1}\mu_m^2,
\end{split}
\end{equation}
with $k,\ell=1,\ldots,M-1$ and $\ell\neq k$. After subtracting these two equations, we see that
\begin{equation}\label{three}
\bigforall_{\ell\neq k}\quad \xi_k(\xi_k-\mu_k)+\mu_\ell\xi_\ell=\frac{d\gamma}{w_k^2}.
\end{equation}
Therefore, $\mu_\ell\xi_\ell$ does not depend on the choice of $\ell$. If we apply this to eq. (\ref{three}) and the second line in eq. (\ref{two}), it becomes clear that our problem reduces to solving the system of three equations:
\begin{align}
A&=\mu_k\xi_k,\label{eq1}\\
\sum_{m=1}^{M-1}\mu_m^2&=2A-1,\label{eq2}\\
\xi_k(\xi_k-\mu_k)&=\frac{d\gamma}{w_k^2}-A.\label{eq3}
\end{align}
The solutions read
\begin{equation}\label{ximu}
\xi_k=\pm\frac{\sqrt{d\gamma}}{w_k},\qquad \mu_k=\pm\frac{Aw_k}{\sqrt{d\gamma}},
\end{equation}
where $A$ is satisfies the quadratic equation
\begin{equation}\label{eqA}
(d\gamma-w_M^2)A^2-2d\gamma A+d\gamma=0\qquad\Longleftrightarrow\qquad
A=\frac{\sqrt{d\gamma}}{\sqrt{d\gamma}\pm w_M}.
\end{equation}
In what follows, we present the derivations only for the positive solutions for $\xi_k$ and $\mu_k$. The negative solutions are achieved by simply switching the sign of the traceless basis operators ($G_k\to-G_k$), and hence they merely swap $R_k^{(+)}\leftrightarrow R_k^{(-)}$. Therefore, eq. (\ref{ximu}) provides us with one solution for $\xi_k$, as well as two solutions for $\mu_k$ and $\nu_k$, one for each value of $A$:
\begin{equation}\label{ximu2}
\xi_k=\frac{\sqrt{d\gamma}}{w_k},\qquad \mu_k=\frac{w_k}{\sqrt{d\gamma}\pm w_M},\qquad \nu_k=\pm\frac{w_k}{w_M}.
\end{equation}
\end{proof}

The results of Proposition \ref{4Hk} show that a single operator basis produces four informationally complete GEWFs: $\mathcal{R}_\pm=\{R_k^{(\pm)}:\,k=1,\ldots,M\}$ and $\mathcal{R}^\prime_\pm=\{R_k^{\prime(\pm)}:\,k=1,\ldots,M\}$, with elements given by eq. (\ref{gen}) and
\begin{equation}\label{genprime}
R_k^{\prime(\pm)}=\frac{w_k}{d}\left(\mathbb{I}_d\pm\sqrt{\frac{dc}{\gamma}}H_k^\prime\right).
\end{equation}
Note that the constant $c$ does not appear anywhere in the formulas for $H_k$ and $H_k^\prime$. Therefore, it is only responsible for controling the semi-positivity of the GEWF operators. Notably, its value can be different depending on the construction method. Indeed, let us denote two bounds for $c$ by
\begin{equation}\label{cc}
c_+=\frac{\gamma}{d\lambda_{\min}^2},\qquad c_-=\frac{\gamma}{d\lambda_{\max}^2},
\end{equation}
where $\lambda_{\min}$ and $\lambda_{\max}$ are the minimal and maximal eigenvalues from among all eigenvalues of $H_k$ or $H_k^\prime$, $k=1,\ldots,M$. 
Now, for $R_k^{(+)}$ and $R_k^{\prime(+)}$, the semi-positivity condition is equivalent to $0<c\leq c_+$. Analogically, for $R_k^{(-)}$ and $R_k^{\prime(-)}$, one has $0<c\leq c_-$. In general, $c_+\neq c_-$, and below we present an example where $c=c_+=c_-$ for both $R_k^{(\pm)}$ and $R_k^{\prime(\pm)}$.

\begin{Example}\label{excon}
Consider a qubit case ($d=2$) with $\gamma=25$ and $M=4$. Take the Hermitian orthonormal basis of normalized Pauli matrices $G_k=\sigma_k/\sqrt{2}$, $k=1,2,3$. The operator traces are chosen as follows: $w_1=w_2=3$, $w_3=w_4=4$. Depending on the construction method, one obtains the following GEWFs:
\begin{enumerate}[label=(\it\roman*)]
\item $ $
\begin{equation}
\begin{split}
R_1^{(+)}&=\frac{3}{34\sqrt{82}}
\begin{pmatrix}
-60-\xi_- & \nu_-+i\eta_+ \\
\nu_--i\eta_+ & 60+\xi_+
\end{pmatrix},\\
R_2^{(+)}&=\frac{3}{34\sqrt{82}}
\begin{pmatrix}
-60-\xi_- & -\eta_+-i\nu_- \\
-\eta_++i\nu_- & 60+\xi_+
\end{pmatrix},
\end{split}\qquad
\begin{split}
R_3^{(+)}&=\frac{1}{17\sqrt{82}}
\begin{pmatrix}
135-2\xi_- & -2(1-i)\eta_+ \\
-2(1+i)\eta_+ & -135+2\xi_+
\end{pmatrix},\\
R_4^{(+)}&=\frac{1}{2\sqrt{41}}
\begin{pmatrix}
4(\sqrt{41}+3) & 9(1-i) \\
9(1+i) & 4(\sqrt{41}-3)
\end{pmatrix},
\end{split}
\end{equation}
\item $ $
\begin{equation}
\begin{split}
R_1^{(-)}&=\frac{3}{34\sqrt{82}}
\begin{pmatrix}
60+\xi_+ & -\nu_--i\eta_+ \\
-\nu_-+i\eta_+ & -60-\xi_-
\end{pmatrix},\\
R_2^{(-)}&=\frac{3}{34\sqrt{82}}
\begin{pmatrix}
60+\xi_+ & \eta_++i\nu_- \\
\eta_+-i\nu_- & -60-\xi_-
\end{pmatrix},
\end{split}\qquad
\begin{split}
R_3^{(-)}&=\frac{1}{17\sqrt{82}}
\begin{pmatrix}
-135+2\xi_+ & 2(1-i)\eta_+ \\
2(1+i)\eta_+ & 135-2\xi_-
\end{pmatrix},\\
R_4^{(-)}&=\frac{1}{2\sqrt{41}}
\begin{pmatrix}
4(\sqrt{41}-3) & -9(1-i) \\
-9(1+i) & 4(\sqrt{41}+3)
\end{pmatrix},
\end{split}
\end{equation}
\item $ $
\begin{equation}
\begin{split}
R_1^{\prime(+)}&=\frac{3}{34\sqrt{82}}
\begin{pmatrix}
-60+\xi_+ & \nu_++i\eta_- \\
\nu_+-i\eta_- & 60-\xi_-
\end{pmatrix},\\
R_2^{\prime(+)}&=\frac{3}{34\sqrt{82}}
\begin{pmatrix}
-60+\xi_+ & -\eta_--i\nu_+ \\
-\eta_-+i\nu_+ & 60-\xi_-
\end{pmatrix},
\end{split}\qquad
\begin{split}
R_3^{\prime(+)}&=\frac{1}{17\sqrt{82}}
\begin{pmatrix}
135+2\xi_+ & -2(1-i)\eta_- \\
-2(1+i)\eta_- & -135-2\xi_-
\end{pmatrix},\\
R_4^{\prime(+)}&=\frac{1}{2\sqrt{41}}
\begin{pmatrix}
4(\sqrt{41}-3) & -9(1-i) \\
-9(1+i) & 4(\sqrt{41}+3)
\end{pmatrix},
\end{split}
\end{equation}
\item $ $
\begin{equation}
\begin{split}
R_1^{\prime(-)}&=\frac{3}{34\sqrt{82}}
\begin{pmatrix}
60-\xi_- & -\nu_+-i\eta_- \\
-\nu_++i\eta_- & -60+\xi_+
\end{pmatrix},\\
R_2^{\prime(-)}&=\frac{3}{34\sqrt{82}}
\begin{pmatrix}
60-\xi_- & \eta_-+i\nu_+ \\
\eta_--i\nu_+ & -60+\xi_+
\end{pmatrix},
\end{split}\qquad
\begin{split}
R_3^{\prime(-)}&=\frac{1}{17\sqrt{82}}
\begin{pmatrix}
-135-2\xi_- & 2(1-i)\eta_- \\
2(1+i)\eta_- & 135+2\xi_+
\end{pmatrix},\\
R_4^{\prime(-)}&=\frac{1}{2\sqrt{41}}
\begin{pmatrix}
4(\sqrt{41}+3) & 9(1-i) \\
9(1+i) & 4(\sqrt{41}-3)
\end{pmatrix},
\end{split}
\end{equation}
\end{enumerate}
where $\eta_\pm=9(5\pm 2\sqrt{2})$, $\xi_\pm=\sqrt{2}(24\pm 17\sqrt{41})$, and $\nu_\pm=125\pm 18\sqrt{2}$. The above operators are calculated for the maximal admissible value of $c=225/41$, which is the same for all four GEWFs.
\end{Example}

Now, consider the opposite scenario: we are given a GEWF and asked what Hermitian orthonormal basis it corresponds to. The results of Proposition \ref{4Hk} show us how to recover $G_k$ from the semi-positive operators $R_k$.

\begin{Proposition}\label{4Gk}
An informationally complete GEWF $\mathcal{R}=\{R_k;\,k=1,\ldots,M=d^2\}$ produces four Hermitian orthonormal operator bases: $\mathcal{G}_\pm=\{\mathbb{I}_d/\sqrt{d},\pm G_k:\,k=1,\ldots,d^2-1\}$ and $\mathcal{G}^\prime_\pm=\{\mathbb{I}_d/\sqrt{d},\pm G_k^\prime:\,k=1,\ldots,d^2-1\}$ with
\begin{equation}
\begin{split}&
(i)\quad
G_k=\frac{1}{\sqrt{c}}\left[R_k+\frac{w_k}{\sqrt{d\gamma}-w_M}\left( R_M-\sqrt{\frac{\gamma}{d}}\mathbb{I}_d\right)\right],\\&
(ii)\quad
G_k^\prime=\frac{1}{\sqrt{c}}\left[R_k-\frac{w_k}{\sqrt{d\gamma}+w_M}\left(
R_M+\sqrt{\frac{\gamma}{d}}\mathbb{I}_d\right)\right].
\end{split}
\end{equation}
\end{Proposition}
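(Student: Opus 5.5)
The plan is to invert the construction of Proposition~\ref{4Hk}, and then check directly that the resulting operators form a Hermitian orthonormal basis together with $\mathbb{I}_d/\sqrt{d}$.

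\textbf{Case (i).} Start from eq.~(\ref{gen}) with the upper sign and solve for $H_k$:
\[
\sqrt{\tfrac{dc}{\gamma}}\,H_k=\tfrac{d}{w_k}R_k-\mathbb{I}_d .
\]
Insert the explicit forms (i) of Proposition~\ref{4Hk}. For $H_M$ this gives
\[
\sum_{\ell=1}^{M-1}w_\ell G_\ell=w_M H_M=\sqrt{\tfrac{\gamma}{dc}}\,\bigl(dR_M-w_M\mathbb{I}_d\bigr).
\]
For $k\le M-1$ it gives
\[
\frac{\sqrt{d\gamma}}{w_k}G_k=H_k+\frac{1}{\sqrt{d\gamma}-w_M}\sum_\ell w_\ell G_\ell .
\]
Substituting the first identity into the second and multiplying by $w_k/\sqrt{d\gamma}$ should give
\[
G_k=\frac{1}{\sqrt c}\Bigl[R_k-\frac{w_k}{d}\mathbb{I}_d+\frac{w_k}{\sqrt{d\gamma}-w_M}\Bigl(R_M-\frac{w_M}{d}\mathbb{I}_d\Bigr)\Bigr].
\]
The identity terms collect to
\[
-\frac{w_k}{d}\,\frac{\sqrt{d\gamma}}{\sqrt{d\gamma}-w_M}\,\mathbb{I}_d=-\frac{w_k}{\sqrt{d\gamma}-w_M}\sqrt{\frac{\gamma}{d}}\,\mathbb{I}_d ,
\]
which reproduces formula (i). Case (ii) is the same computation with $w_M\to -w_M$ in the denominators and $H_M'=-\sum_\ell (w_\ell/w_M)G_\ell$.

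\textbf{The signs $\pm$.} If the given GEWF is instead of the type $R_k^{(-)}$, replace $H_k\to -H_k$. By the remark after eq.~(\ref{ximu}), this corresponds to $G_k\to -G_k$, which produces the bases $\mathcal{G}_-$ and $\mathcal{G}'_-$.

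\textbf{Intrinsic check.} A reader may object that the derivation presupposes $\mathcal{R}$ was generated by Proposition~\ref{4Hk}. So I would also verify directly, using only eq.~(\ref{def}) and eq.~(\ref{deff}), that the formula defines such a basis.
\begin{itemize}
\item Hermiticity is immediate.
\item Tracelessness follows from $\Tr R_k=w_k$, $\Tr R_M=w_M$ and $\Tr(\sqrt{\gamma/d}\,\mathbb{I}_d)=\sqrt{d\gamma}$, since $w_M-\sqrt{d\gamma}$ cancels the denominator.
\item For orthonormality, write $G_k=(R_k+a_kB)/\sqrt c$ with $B=R_M-\sqrt{\gamma/d}\,\mathbb{I}_d$ and $a_k=w_k/(\sqrt{d\gamma}-w_M)$. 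Expand $\Tr(G_kG_\ell)$ using $\Tr(R_kR_\ell)=yw_kw_\ell+c\,\delta_{k\ell}$ and $y=(\gamma-c)/(d\gamma)$.
\end{itemize}
The terms needed in the expansion are
\[
\Tr(R_kB)=yw_kw_M-\sqrt{\tfrac{\gamma}{d}}\,w_k,\qquad
\Tr(B^2)=x_M-2\sqrt{\tfrac{\gamma}{d}}\,w_M+\gamma .
\]
The goal is to show that everything except $c\,\delta_{k\ell}$ cancels, leaving $\Tr(G_kG_\ell)=\delta_{k\ell}$. Completeness then follows because we obtain $d^2-1$ orthonormal traceless operators.

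\textbf{Main obstacle.} I expect the cancellation in $\Tr(G_kG_\ell)$ to be the delicate step. The coefficient of $w_kw_\ell$ must vanish, i.e.
\[
y+2a\Bigl(yw_M-\sqrt{\tfrac{\gamma}{d}}\Bigr)+a^2\,\Tr(B^2)=0,\qquad a=\frac{1}{\sqrt{d\gamma}-w_M}.
\]
I would first check whether this reduces, after substituting $x_M$ from eq.~(\ref{deff}), to the quadratic (\ref{eqA}) in $A$. Tracking this should confirm that the two admissible roots $A=\sqrt{d\gamma}/(\sqrt{d\gamma}\pm w_M)$ correspond exactly to $G_k$ and $G_k'$, so that no other choice of $a$ works.
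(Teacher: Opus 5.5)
Your inversion of Proposition~\ref{4Hk} is correct and is exactly the paper's argument. The paper gives no separate proof beyond the remark that Proposition~\ref{4Hk} shows how to recover $G_k$ from $R_k$, and your algebra reproduces (i) and (ii). Your intrinsic check is a worthwhile addition that the paper lacks. The ansatz (\ref{Hkgen}) is admittedly not the most general one, so inversion alone only covers GEWFs already known to be of the form (\ref{gen}). Together with $\sum_k w_kR_k=\gamma\mathbb{I}_d$, the direct check even shows that every informationally complete GEWF arises from Proposition~\ref{4Hk}(i).

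The cancellation you flag does go through, but your expectation about the two roots is wrong for the family you chose. With $s=\sqrt{d\gamma}$ and $B$ fixed, your coefficient factors as
\[
\Bigl(a(s-w_M)-1\Bigr)\left[\frac{a(s-w_M)-1}{d}+\frac{c\,\bigl(a(s+w_M)+1\bigr)}{s^{2}}\right].
\]
It therefore vanishes at $a=1/(s-w_M)$, which gives (i). The second root depends on $c$ and violates tracelessness; in your parametrization, tracelessness alone already forces $a=1/(s-w_M)$. In particular $G_k'$ is not in your family at all, since it involves $R_M+\sqrt{\gamma/d}\,\mathbb{I}_d$ rather than $B$. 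So your plan gives no direct check of (ii).

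To handle both cases at once, work with the traceless parts $\tilde R_j=R_j-\frac{w_j}{d}\mathbb{I}_d$, for which $\Tr(\tilde R_j\tilde R_m)=c\,\delta_{jm}-\frac{c}{d\gamma}w_jw_m$. Put $G_k=c^{-1/2}(\tilde R_k+a\,w_k\tilde R_M)$; this is automatically traceless, and
\[
\Tr(G_kG_\ell)=\delta_{k\ell}+\frac{w_kw_\ell}{d\gamma}\Bigl[(d\gamma-w_M^2)\,a^2-2w_M\,a-1\Bigr].
\]
The roots $a=1/(s-w_M)$ and $a=-1/(s+w_M)$ are precisely (i) and (ii). Under the substitution $A=1+aw_M$, the bracket becomes $w_M^{-2}$ times the left-hand side of eq.~(\ref{eqA}), which is the correspondence you were after.

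Finally, the signs $\pm$ need no argument, since negating members of an orthonormal set keeps it orthonormal. The denominators never vanish because $d\gamma=\sum_k w_k^2>w_M^2$.
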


\section{Mutually unbiased weighted frames}

Let us generalize the results from the previous section. Consider $N$ generalized equiangular weighted frames $\mathcal{R}_\alpha=\{R_{\alpha,k}:\,k=1,\ldots,M_\alpha\}$ on $\mathbb{C}^d$. Additionally, assume that the operators from two different $\mathcal{R}_\alpha$ are mutually unbiased \cite{Fickus,Goyeneche2} (or complementary \cite{PetzRuppert}); that is, $\Tr(R_{\alpha,k}R_{\beta,\ell})=f\Tr(R_{\alpha,k})\Tr(R_{\beta,\ell})$ for any $\beta\neq\alpha$. This results in the following definition.

\begin{Definition}\label{Def2}
The mutually unbiased equiangular weighted frames (MU GEWFs) are a collection $\mathcal{R}=\cup_{\alpha=1}^N\mathcal{R}_\alpha$ of $N$ GEWFs $\mathcal{R}_\alpha=\{R_{\alpha,k}:\,k=1,\ldots,M_\alpha\}$ such that
\begin{equation}\label{def2}
\begin{split}
\Tr(R_{\alpha,k})&=w_{\alpha,k},\\
\Tr(R_{\alpha,k}^2)&=c_\alpha+\frac{\gamma_\alpha-c_\alpha}{d\gamma_\alpha}w_{\alpha,k}^2,\\
\Tr(R_{\alpha,k}R_{\alpha,\ell})&=\frac{\gamma_\alpha-c_\alpha}{d\gamma_\alpha}w_{\alpha,k} w_{\alpha,\ell},\quad \ell\neq k,\\
\Tr(R_{\alpha,k}R_{\beta,\ell})&=\frac 1d w_{\alpha,k} w_{\beta,\ell},\quad \beta\neq \alpha,\\
\sum_{k=1}^{M_\alpha}w_{\alpha,k}R_{\alpha,k}&=\gamma_\alpha\mathbb{I}_d,\quad \gamma_\alpha>0.
\end{split}
\end{equation}
\end{Definition}

It is important to note that $f=1/d$ is the only solution that guarantees mutual unbiasedness of $\mathcal{R}_\alpha$, $\alpha=1,\ldots,N$. Analogically to eqs. (\ref{war1}), (\ref{war2}), and (\ref{c2}), we have
\begin{equation}\label{cgamma}
0<c_\alpha\leq\gamma_\alpha\min\left\{\frac{d-1}{M_\alpha-1},1\right\},\qquad \gamma_\alpha=\frac 1d \sum_{k=1}^{M_\alpha} w_{\alpha,k}^2,
\end{equation}
\begin{equation}\label{CS}
w_{\alpha,k}\geq\sqrt{\frac{dc_\alpha\gamma_\alpha}{(d-1)\gamma_\alpha+c_\alpha}}.
\end{equation}
For $N=1$, one recovers the GEWF from the previous section. 

Now, observe that, in each $\mathcal{R}_\alpha$, there is always one operator that is linearly dependent on $\mathbb{I}_d$ and the remaining elements of $\mathcal{R}_\alpha$. Therefore, the total number of operators in $\mathcal{R}=\cup_{\alpha=1}^N\mathcal{R}_\alpha$ is bounded by $|\mathcal{R}|=\sum_{\alpha=1}^{N}M_\alpha\leq d^2-1-N$.

\begin{Proposition}\label{IC}
The set $\widetilde{\mathcal{R}}=\{\mathbb{I}_d,\,R_{\alpha,k}:\,k=1,\ldots,M_\alpha-1;\,\alpha=1,\ldots,N\}$ with $|\widetilde{\mathcal{R}}|=d^2$ is an operator basis on $\mathbb{C}^d$.
\end{Proposition}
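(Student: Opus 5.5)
Since $|\widetilde{\mathcal{R}}|=d^2$ equals the dimension of the operator space on $\mathbb{C}^d$, it suffices to show that the elements of $\widetilde{\mathcal{R}}$ are linearly independent. (Note that $|\widetilde{\mathcal{R}}|=d^2$ means $\sum_\alpha M_\alpha=d^2-1+N$; the bound displayed before the proposition should presumably read this way.) I will therefore assume
\begin{equation*}
R=r_0\mathbb{I}_d+\sum_{\alpha=1}^N\sum_{k=1}^{M_\alpha-1}r_{\alpha,k}R_{\alpha,k}=0,
\end{equation*}
with real coefficients, and show that all coefficients vanish. As in the proof of linear independence of GEWF elements, the idea is to take traces of $R$ against suitable operators and use the relations in eq. (\ref{def2}).

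The key first step is to pass to traceless parts. For each $\alpha$, define $Q_{\alpha,k}=R_{\alpha,k}-\frac{w_{\alpha,k}}{d}\mathbb{I}_d$. From eq. (\ref{def2}), traceless parts belonging to different frames are orthogonal:
\begin{equation*}
\Tr(Q_{\alpha,k}Q_{\beta,\ell})=\frac1d w_{\alpha,k}w_{\beta,\ell}-\frac1d w_{\alpha,k}w_{\beta,\ell}=0,\qquad \alpha\neq\beta .
\end{equation*}
Within one frame one gets
\begin{equation*}
\Tr(Q_{\alpha,k}Q_{\alpha,\ell})=c_\alpha\delta_{k\ell}-\frac{c_\alpha}{d\gamma_\alpha}w_{\alpha,k}w_{\alpha,\ell}.
\end{equation*}
Rewriting $R=0$ as
\begin{equation*}
r_0^\prime\mathbb{I}_d+\sum_{\alpha,k}r_{\alpha,k}Q_{\alpha,k}=0,
\end{equation*}
with $r_0^\prime$ collecting all identity contributions, and taking the trace gives $r_0^\prime=0$. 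The remaining sum then splits by orthogonality into the $N$ mutually orthogonal subspaces $\mathrm{span}\{Q_{\alpha,k}\}$. So it is enough to show, for each fixed $\alpha$, that
\begin{equation*}
\sum_{k=1}^{M_\alpha-1}r_{\alpha,k}Q_{\alpha,k}=0 \quad\Longrightarrow\quad r_{\alpha,k}=0 .
\end{equation*}

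For a fixed $\alpha$, I take the trace of this equation against $Q_{\alpha,\ell}$ for $\ell\leq M_\alpha-1$. Writing $S=\sum_{k=1}^{M_\alpha-1}r_{\alpha,k}w_{\alpha,k}$, this gives
\begin{equation*}
c_\alpha r_{\alpha,\ell}=\frac{c_\alpha}{d\gamma_\alpha}w_{\alpha,\ell}S .
\end{equation*}
Since $c_\alpha>0$, this means $r_{\alpha,\ell}=w_{\alpha,\ell}S/(d\gamma_\alpha)$. Multiplying by $w_{\alpha,\ell}$ and summing over $\ell$ yields
\begin{equation*}
S\left(1-\frac{\sum_{\ell=1}^{M_\alpha-1}w_{\alpha,\ell}^2}{d\gamma_\alpha}\right)=0 .
\end{equation*}
By eq. (\ref{cgamma}), $d\gamma_\alpha=\sum_{\ell=1}^{M_\alpha}w_{\alpha,\ell}^2$, so the bracket equals $w_{\alpha,M_\alpha}^2/(d\gamma_\alpha)>0$ because $w_{\alpha,k}>0$. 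Hence $S=0$, so every $r_{\alpha,\ell}=0$. Finally, $r_0^\prime=0$ then forces $r_0=0$.

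The main obstacle, and the place where the omission of one element per frame is essential, is this within-frame step. The Gram matrix of all $M_\alpha$ traceless parts is singular, because $\sum_k w_{\alpha,k}Q_{\alpha,k}=0$ by the tight-frame condition. Dropping $R_{\alpha,M_\alpha}$ is exactly what makes the truncated Gram matrix $c_\alpha(\mathbb{I}-ww^T/d\gamma_\alpha)$ invertible, since then $\|w\|^2<d\gamma_\alpha$. I would check carefully that the positivity of $w_{\alpha,M_\alpha}$ is what guarantees this strict inequality.
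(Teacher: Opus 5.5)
Your proof is correct and follows essentially the same route as the paper. Both arguments reduce to $c_\beta r_{\beta,\ell}=\frac{c_\beta w_{\beta,\ell}}{d\gamma_\beta}\sum_{k=1}^{M_\beta-1}r_{\beta,k}w_{\beta,k}$ and then multiply by $w_{\beta,\ell}$ and sum to kill the weighted sum; the only organizational difference is that you decouple the frames by passing to traceless parts, whereas the paper traces $R$ directly against $R_{\beta,\ell}$ and uses $\Tr(R)=0$ to eliminate $r_0$ and the cross-frame terms. Your factor $w_{\beta,M_\beta}^2/(d\gamma_\beta)$, which is positive because $w_{\beta,M_\beta}>0$, is the correct one; the paper's displayed prefactor $\frac{c_\beta}{d}(d-1+w_{\beta,M_\beta}^2/\gamma_\beta)$ is off, though positive either way. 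Your reading $\sum_\alpha M_\alpha=d^2-1+N$ matches the count used in the paper's conical 2-design theorem.
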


\begin{proof}
Following the method from Proposition 1, we define
\begin{equation}
R=r_0\mathbb{I}_d+\sum_{\alpha=1}^N\sum_{k=1}^{M_\alpha-1}r_{\alpha,k}R_{\alpha,k},\qquad r_0,r_{\alpha,k}\in\mathbb{R}.
\end{equation}
Now, if $R=0$, then
\begin{align}
\Tr(R)&=dr_0+\sum_{\alpha=1}^N\sum_{k=1}^{M_\alpha-1}r_{\alpha,k}w_{\alpha,k}=0,\label{rr1}\\
\Tr(RR_{\beta,\ell})&=c_\beta r_{\beta,\ell} -\frac{c_\beta w_{\beta,\ell}}{d\gamma_\beta}
\sum_{k=1}^{M_\alpha-1}r_{\beta,k}w_{\beta,k}=0.\label{rr2}
\end{align}
Observe that multiplying eq. (\ref{rr2}) by $w_{\beta,\ell}$ and taking the sum over $\ell$ gives
\begin{equation}
\sum_{\ell=1}^{M_\beta-1}w_{\beta,\ell}\Tr(RR_{\beta,\ell})
=\frac{c_\beta}{d}\left(d-1+\frac{w_{\beta,M_\beta}^2}{\gamma_\beta}\right)
\sum_{\ell=1}^{M_\beta-1}r_{\beta,\ell}w_{\beta,\ell}=0.
\end{equation}
The parameters $c_\beta>0$ and $w_{\beta,M_\beta}^2\leq\gamma_\beta$, and therefore this means that $\sum_{\ell=1}^{M_\beta-1}r_{\beta,\ell}w_{\beta,\ell}=0$. After applying this result, eqs. (\ref{rr1}) and (\ref{rr2}) are equivalent to $r_0=r_{\beta,\ell}=0$.
Finally, as $R=0$ implies that all $r_0=r_{\alpha,k}=0$,
the elements of $\widetilde{\mathcal{R}}$ are linearly independent. Moreover, $|\widetilde{\mathcal{R}}|=d^2$, and hence it is an operator basis.
\end{proof}

The construction of $R_{\alpha,k}$ follows directly from Proposition \ref{4Hk}. Analogically, we take a subset $\mathcal{G}=\{G_{\alpha,k}:\,k=1,\ldots,M_\alpha-1;\,\alpha=1,\ldots,N\}$ of traceless operators $G_{\alpha,k}$ from an orthonormal Hermitian operator basis that also contains $G_0=\mathbb{I}_d/\sqrt{d}$. We use $G_{\alpha,k}$ to introduce two families of traceless operators:
\begin{equation}\label{4Hak}
\left\{
\begin{split}
H_{\alpha,k}&=\frac{\sqrt{d\gamma_\alpha}}{w_{\alpha,k}}G_{\alpha,k}-\sum_{\ell=1}^{M_\alpha-1}
\frac{w_{\alpha,\ell}}{\sqrt{d\gamma_\alpha}-w_{\alpha,M_\alpha}} G_{\alpha,\ell},\\
H_{\alpha,M_\alpha}&=\sum_{\ell=1}^{M_\alpha-1}\frac{w_{\alpha,\ell}}{w_{\alpha,M_\alpha}} G_{\alpha,\ell},
\end{split}
\right.\qquad
\left\{
\begin{split}
H_{\alpha,k}^\prime&=\frac{\sqrt{d\gamma_\alpha}}{w_{\alpha,k}}G_{\alpha,k}-\sum_{\ell=1}^{M_\alpha-1} \frac{w_{\alpha,\ell}}{\sqrt{d\gamma_\alpha}+w_{\alpha,M_\alpha}} G_{\alpha,\ell},\\
H_{\alpha,M_\alpha}^\prime&=-\sum_{\ell=1}^{M_\alpha-1}\frac{w_{\alpha,\ell}}{w_{\alpha,M_\alpha}} G_{\alpha,\ell},
\end{split}
\right.
\end{equation}
where $k=1,\ldots,M_\alpha-1$. Therefore, for every $\alpha=1,\ldots,N$, one gets four GEWFs: $\mathcal{R}_\alpha^{(\pm)}=\{R_{\alpha,k}^{(\pm)}:\,k=1,\ldots,M_\alpha\}$ and $\mathcal{R}_\alpha^{\prime(\pm)}=\{R_{\alpha,k}^{\prime(\pm)}:\,k=1,\ldots,M_\alpha\}$. The elements of these sets are given by
\begin{equation}\label{gen2}
R_{\alpha,k}^{(\pm)}=\frac{w_{\alpha,k}}{d}\left(\mathbb{I}_d
\pm\sqrt{\frac{dc_\alpha}{\gamma_\alpha}}H_{\alpha,k}\right),
\end{equation}
\begin{equation}\label{gen2A}
R_{\alpha,k}^{\prime(\pm)}=\frac{w_{\alpha,k}}{d}\left(\mathbb{I}_d
\pm\sqrt{\frac{dc_\alpha}{\gamma_\alpha}}H_{\alpha,k}^\prime\right).
\end{equation}
In analogy to eq. (\ref{cc}), one has
\begin{equation}\label{cc2}
c_\alpha^{(+)}=\frac{\gamma_\alpha}{d\lambda_{\alpha,\min}^2},\qquad c_\alpha^{(-)}=\frac{\gamma_\alpha}{d\lambda_{\alpha,\max}^2},
\end{equation}
where $\lambda_{\alpha,\min}$ and $\lambda_{\alpha,\max}$ are the minimal and maximal eigenvalues from among all eigenvalues of $H_{\alpha,k}$ or $H_{\alpha,k}^\prime$, $k=1,\ldots,M_\alpha$.
Therefore, $R_{\alpha,k}^{(+)}$ and $R_{\alpha,k}^{\prime(+)}$ are positive semidefinite if and only if $0<c_\alpha\leq c_\alpha^{(+)}$. In full analogy, $R_{\alpha,k}^{(-)}$ and $R_{\alpha,k}^{\prime(-)}$ are positive semidefinite if and only if $0<c_\alpha\leq c_\alpha^{(-)}$.

Obviously, every collection $\mathcal{R}_\pm=\cup_{\alpha=1}^N\mathcal{R}_\alpha^{(\pm)}$ and $\mathcal{R}_\pm^\prime=\cup_{\alpha=1}^N\mathcal{R}_\alpha^{\prime(\pm)}$ form four sets of mutually unbiased (MU) GEWFs. However, these are not the only possible choices. In fact, every collection $\mathcal{R}=\cup_{\alpha=1}^N\mathcal{R}_\alpha$ with $\mathcal{R}_\alpha\in\{\mathcal{R}_\alpha^{(\pm)},\mathcal{R}_\alpha^{\prime(\pm)}\}$ defines a MU GEWF due to
\begin{equation}
\Tr(H_{\alpha,k}H_{\beta,\ell})=\Tr(H_{\alpha,k}^\prime H_{\beta,\ell}^\prime)=
\Tr(H_{\alpha,k}H_{\beta,\ell}^\prime)=0
\end{equation}
for any $\alpha\neq\beta$. This means that a single choice of $\mathcal{G}$ allows us to construct $4N$ sets $\mathcal{R}$.

Conversely, a single informationally (over)complete set $\mathcal{R}$ of MU GEWFs with elements $R_{\alpha,k}$ leads to $4N$ orthonormal Hermitian operator bases $\mathcal{G}=\cup_{\alpha=0}^N\mathcal{G}_\alpha$ with $\mathcal{G}_0=\{\mathbb{I}_d/\sqrt{d}\}$ and $\mathcal{G}_\alpha\in\{\mathcal{G}_\alpha^{(\pm)},\mathcal{G}_\alpha^{\prime(\pm)}\}$. From Proposition \ref{4Gk}, these sets of linearly independent operators read as follows:
\begin{enumerate}[label=(\it\roman*)]
\item $\mathcal{G}_\alpha^{(\pm)}=\{\pm G_{\alpha,k}:\,k=1,\ldots,M_\alpha\}$ with
\begin{equation}\label{Gak}
G_{\alpha,k}=\frac{1}{\sqrt{c_\alpha}}\left[R_{\alpha,k}+\frac{w_{\alpha,k}}
{\sqrt{d\gamma_\alpha}- w_{\alpha,M_\alpha}}\left(R_{\alpha,M_\alpha}
-\sqrt{\frac{\gamma_\alpha}{d}}\mathbb{I}_d\right)\right],
\end{equation}
\item $\mathcal{G}_\alpha^{\prime(\pm)}=\{\pm G_{\alpha,k}^\prime:\,k=1,\ldots,M_\alpha\}$, where
\begin{equation}
G_{\alpha,k}^\prime=\frac{1}{\sqrt{c_\alpha}}\left[R_{\alpha,k}-\frac{w_{\alpha,k}}
{\sqrt{d\gamma_\alpha}+w_{\alpha,M_\alpha}}\left(R_{\alpha,M_\alpha}
+\sqrt{\frac{\gamma_\alpha}{d}}\mathbb{I}_d\right)\right].
\end{equation}
\end{enumerate}

\begin{Example}\label{Exak}
Consider the qubit scenario ($d=2$), where we have $N=2$ sets of operators $R_{\alpha,k}=R_{\alpha,k}^{(+)}$ of the size $M_1=2$ and $M_2=3$. The traces of $R_{\alpha,k}$ are fixed, so that
\[w_{1,1}=1,\qquad w_{1,2}=w_{2,1}=\sqrt{3},\qquad w_{2,2}=2,\qquad w_{2,3}=3,\]
and hence $\gamma_1=2$, $\gamma_2=8$. Let us chooce the parameters $c_1=c_2=2/3$, which guarantees the semi-positivity of all the elements $R_{\alpha,k}$. This results in two MU GEWFs:
\begin{equation}
R_{1,1}=\frac 12 \begin{pmatrix}
1 & -1 \\ -1 & 1
\end{pmatrix},\qquad 
R_{1,2}=\frac{1}{2\sqrt{3}} \begin{pmatrix}
3 & 1 \\ 1 & 3
\end{pmatrix},
\end{equation}
\begin{equation}
R_{2,1}=\frac{1}{4\sqrt{3}} \begin{pmatrix}
6-2\sqrt{3} & -i \\ i & 6+2\sqrt{3}
\end{pmatrix},\qquad 
R_{2,2}=\frac{1}{2} \begin{pmatrix}
2 & i \\ -i & 2
\end{pmatrix},
\qquad 
R_{2,3}=\frac{1}{4\sqrt{3}} \begin{pmatrix}
2+6\sqrt{3} & -i\sqrt{3} \\ i\sqrt{3} & -2+6\sqrt{3}
\end{pmatrix}.
\end{equation}
\end{Example}


Now, let us further analyze the relations between $R_{\alpha,k}^{(+)}$ and $R_{\alpha,k}^{(-)}$ under the assumption that $c_\alpha\leq\min\{c_\alpha^{(\pm)}\}$, so that both $R_{\alpha,k}^{(\pm)}\geq 0$. Then,
\begin{equation}
\Tr(R_{\alpha,k}^{(+)}R_{\alpha,k}^{(-)})=-c_\alpha+\frac{\gamma_\alpha+c_\alpha}{d\gamma_\alpha}
w_{\alpha,k}^2,
\end{equation}
\begin{equation}
\Tr(R_{\alpha,k}^{(+)}R_{\alpha,\ell}^{(-)})=\frac{\gamma_\alpha+c_\alpha}{d\gamma_\alpha}
w_{\alpha,k}w_{\alpha,\ell},\qquad k\neq\ell.
\end{equation}
Note that the above results differ from $\Tr(R_{\alpha,k}^2)$ and $\Tr(R_{\alpha,k}R_{\alpha,\ell})$, respectively, in the sign before $c_\alpha$. Moreover, $R_{\alpha,k}^{(+)}$ and $R_{\alpha,k}^{(-)}$ are orthogonal in the Hilbert-Schmidt inner product provided that
\begin{equation}
w_{\alpha,k}=\sqrt{\frac{d\gamma_\alpha}{M_\alpha}},\qquad c_\alpha=\frac{\gamma_\alpha}{M_\alpha-1}.
\end{equation}
For this choice of parameters,
\begin{equation}
\Tr(R_{\alpha,k}^{(+)2})=\frac{2\gamma_\alpha}{M_\alpha},\qquad 
\Tr(R_{\alpha,k}^{(+)}R_{\alpha,\ell}^{(+)})=\frac{M_\alpha-2}{M_\alpha}\gamma_\alpha,\qquad k\neq\ell,
\end{equation}
which means that $R_{\alpha,k}^{(+)}$ are mutually orthogonal for $M_\alpha=2$.


\section{Relation to equiangular tight frames}

When constructing equiangular tight frames, one first considers collections of equiangular lines $\mathcal{E}_\alpha=\{E_{\alpha,k}:\,k=1,\ldots,M_\alpha\}$, where $\Tr(E_{\alpha,k}E_{\alpha,\ell})=y_\alpha\Tr(E_{\alpha,k})\Tr(E_{\alpha,\ell})$ for $k\neq\ell$. If additionally $\sum_{k=1}^{M_\alpha}E_{\alpha,k}=\kappa_\alpha\mathbb{I}_d$ for some positive constant $\kappa_\alpha$, then $\mathcal{E}_\alpha$ forms an equiangular tight frame \cite{Strohmer}. The difference between tight and weighted frames is in the admissible values of weights, which here have to be only $\alpha$-dependent. This results in every operator $E_{\alpha,k}\in\mathcal{E}_\alpha$ to be of equal trace. That property holds also for their generalizations.

\begin{Definition}(\cite{GEAM,conical})
The mutually unbiased equiangular tight frames (MU GETFs) are a collection $\mathcal{E}=\cup_{\alpha=1}^N\mathcal{E}_\alpha$ of $N$ GETFs $\mathcal{E}_\alpha=\{E_{\alpha,k}:\,k=1,\ldots,M_\alpha\}$ such that
\begin{equation}\label{def4}
\begin{split}
\Tr(E_{\alpha,k})&=a_\alpha,\\
\Tr(E_{\alpha,k}^2)&=a_\alpha^2b_\alpha,\\
\Tr(E_{\alpha,k}E_{\alpha,\ell})&=a_\alpha^2\frac{M_\alpha-db_\alpha}{d(M_\alpha-1)},\quad \ell\neq k,\\
\Tr(E_{\alpha,k}E_{\beta,\ell})&=\frac 1d a_\alpha a_\beta,\quad \beta\neq \alpha,\\
\sum_{k=1}^{M_\alpha}E_{\alpha,k}&=\kappa_\alpha\mathbb{I}_d,\quad\kappa_\alpha>0,
\end{split}
\end{equation}
where the free parameters $b_\alpha$ belong to the range
\begin{equation}\label{brange2}
\frac{1}{d}<b_\alpha\leq\frac 1d \min\{d,M_\alpha\}.
\end{equation}
\end{Definition}

The MU GETFs are a relatively general concept. They include many popular operator families as special cases. Some examples are symmetric, informationally complete (SIC) POVMs \cite{Renes}, general SIC POVMs \cite{Gour}, $(N,M)$-POVMs \cite{SIC-MUB}, generalized symmetric measurements \cite{SIC-MUB_general}, generalized equiangular measurements \cite{GEAM}, as well as mutually unbiased bases, measurements \cite{Kalev}, and equiangular tight frames \cite{Fickus}.

From the ways they are defined, it is obvious that MU GETFs are a special case of MU GEWFs. Namely, the frames in MU GEWFs are tight if $w_{\alpha,k}\equiv w_\alpha$. Then, the relations between their characteristic parameters $(w_\alpha,c_\alpha,\gamma_\alpha)\leftrightarrow(a_\alpha,b_\alpha,\kappa_\alpha)$ are as follows,
\begin{equation}
a_\alpha=w_\alpha,\qquad b_\alpha=\frac{\gamma_\alpha-c_\alpha} {d\gamma_\alpha}+\frac{c_\alpha}{w_\alpha^2},\qquad
\kappa_\alpha=\frac{w_\alpha M_\alpha}{d},
\end{equation}
or
\begin{equation}
w_\alpha=a_\alpha,\qquad c_\alpha=a_\alpha^2M_\alpha\frac{db_\alpha-1} {d(M_\alpha-1)},\qquad\gamma_\alpha=\frac{a_\alpha^2M_\alpha}{d}.
\end{equation}

\section{Relation to conical 2-designs}

Conical designs generalize complex projective designs to operators of arbitary rank from the cone of mixed quantum states $\{\lambda\rho:\,\lambda>0,\,\rho\,-{\rm\,density\,operator}\}$. They are defined as follows.

\begin{Definition}(\cite{Graydon})
The conical 2-design is a collection $\mathcal{E}=\{E_k;\,k=1,\ldots,M\}$ of $M\geq d^2$ positive semi-definite operators $E_k$ that satisfy
\begin{equation}\label{conic}
\sum_{k=1}^ME_k\otimes E_k=\kappa_+\mathbb{I}_d\otimes\mathbb{I}_d+\kappa_-\mathbb{F}_d
\end{equation}
with real parameters $\kappa_+\geq\kappa_->0$ and the flip operator $\mathbb{F}_d=\sum_{m,n=0}^{d-1}|m\>\<n|\otimes|n\>\<m|$.
\end{Definition}

From this definition, it follows that a conical 2-design is a family $\mathcal{E}$ of positive semi-definite operators $E_k$ such that $E_k$ span the operator space on $\mathbb{C}^d$ and the sum $\sum_{k=1}^ME_k\otimes E_k$ commutes with $U\otimes U$, where $U$ is an arbitrary unitary operator on $\mathbb{C}^d$. In addition, eq. (\ref{conic}) can be equivalently rewritten in terms of linear mappings. Indeed, if $\Phi[X]=\sum_{k=1}^ME_k\Tr(E_kX)$, then $E_k$ form a conical 2-design if and only if \cite{SIC-MUB_general}
\begin{equation}\label{kap}
\Phi=\kappa_+d\Phi_0+\kappa_-\oper,\qquad \kappa_+\geq\kappa_->0.
\end{equation}
In other words, the map $\Phi$ has to be a linear combination of the identity map $\oper$ and the maximally depolarizing channel $\Phi_0[X]=\mathbb{I}_d\Tr(X)/d$. Some examples of conical 2-designs are homogeneous designs (with constant $\Tr(E_k)$ and $\Tr(E_k^2)$ \cite{Graydon}), complex projective designs \cite{Neumaier,Hoggar,Scott}, and equidistant MU GETFs \cite{GEAM,conical}. Here, we show that MU GETFs are another special class of conical 2-designs, allowing for operators of unequal traces.

\begin{Proposition}
The MU GEWFs $R_{\alpha,k}$ introduced in Definition \ref{Def2} form a conical 2-design if $|\mathcal{R}|=d^2-1-N$ and $c_\alpha\equiv c$ for all $\alpha=1,\ldots,N$. Then,
\begin{equation}\label{kappas}
\kappa_-=c,\qquad \kappa_+=\frac{\Gamma-c}{d},
\end{equation}
where $\Gamma=\sum_{\alpha=1}^N\gamma_\alpha$.
\end{Proposition}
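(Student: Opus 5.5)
The plan is to use the linear-map characterization (\ref{kap}). Define $\Phi[X]=\sum_{\alpha=1}^N\sum_{k=1}^{M_\alpha}R_{\alpha,k}\Tr(R_{\alpha,k}X)$. It suffices to show two things: $\Phi[\mathbb{I}_d]=(d\kappa_++\kappa_-)\mathbb{I}_d$, and $\Phi[X]=\kappa_-X$ for every traceless $X$. Together these give $\Phi=\kappa_+d\Phi_0+\kappa_-\oper$.

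The identity part is immediate from the frame condition:
\begin{equation*}
\Phi[\mathbb{I}_d]=\sum_{\alpha,k}w_{\alpha,k}R_{\alpha,k}=\sum_\alpha\gamma_\alpha\mathbb{I}_d=\Gamma\,\mathbb{I}_d .
\end{equation*}
This matches $d\kappa_++\kappa_-=(\Gamma-c)+c$ with the values in eq. (\ref{kappas}).

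For the traceless part, I will work with the traceless components $T_{\alpha,k}=R_{\alpha,k}-\frac{w_{\alpha,k}}{d}\mathbb{I}_d$. From Definition \ref{Def2}, a short computation gives
\begin{equation*}
\Tr(T_{\alpha,k}T_{\beta,\ell})=\delta_{\alpha\beta}\,c_\alpha\left(\delta_{k\ell}-\frac{w_{\alpha,k}w_{\alpha,\ell}}{d\gamma_\alpha}\right),
\end{equation*}
and the frame condition gives $\sum_kw_{\alpha,k}T_{\alpha,k}=0$. Let $V_\alpha=\mathrm{span}\{T_{\alpha,k}\}$. Then the $V_\alpha$ are mutually Hilbert--Schmidt orthogonal. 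Proposition \ref{IC} says that $\mathbb{I}_d$ together with $R_{\alpha,k}$, $k\le M_\alpha-1$, form a basis, so $\dim V_\alpha=M_\alpha-1$. Under the counting hypothesis $\sum_\alpha(M_\alpha-1)=d^2-1$, the orthogonal direct sum $\bigoplus_\alpha V_\alpha$ is therefore the whole traceless subspace. I read the hypothesis $|\mathcal{R}|=d^2-1-N$ in the statement as this informationally (over)complete case, since by the preceding discussion $\sum_\alpha(M_\alpha-1)\le d^2-1$, with equality only when the traceless space is exhausted. Pinning down this counting step carefully is what I expect to be the main obstacle: it is exactly where the hypothesis on $|\mathcal{R}|$ enters, and without it $\Phi$ would vanish on the orthogonal complement.

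Now take $X$ traceless. Then $\Tr(R_{\alpha,k}X)=\Tr(T_{\alpha,k}X)$, and so
\begin{equation*}
\Phi[X]=\sum_{\alpha,k}T_{\alpha,k}\Tr(T_{\alpha,k}X)+\frac{\mathbb{I}_d}{d}\sum_\alpha\Tr\Big(\sum_kw_{\alpha,k}T_{\alpha,k}X\Big).
\end{equation*}
The last term vanishes because $\sum_kw_{\alpha,k}T_{\alpha,k}=0$. Set $\Psi_\alpha=\sum_kT_{\alpha,k}\Tr(T_{\alpha,k}\,\cdot)$. This map annihilates every $V_\beta$ with $\beta\neq\alpha$, and on the spanning vectors of $V_\alpha$ it acts as
\begin{equation*}
\Psi_\alpha[T_{\alpha,m}]=c_\alpha T_{\alpha,m}-\frac{c_\alpha w_{\alpha,m}}{d\gamma_\alpha}\sum_kw_{\alpha,k}T_{\alpha,k}=c_\alpha T_{\alpha,m}.
\end{equation*}
Hence $\Psi_\alpha=c_\alpha P_{V_\alpha}$, where $P_{V_\alpha}$ is the orthogonal projector onto $V_\alpha$. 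Summing over $\alpha$ with $c_\alpha\equiv c$ gives $\Phi[X]=c\sum_\alpha P_{V_\alpha}X=cX$ on the traceless subspace, so $\kappa_-=c$. The equal-$c_\alpha$ assumption is exactly what makes this restriction a multiple of the identity.

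Finally, I need the inequality $\kappa_+\ge\kappa_->0$. Here $\kappa_-=c>0$ by definition, and $\kappa_+\ge\kappa_-$ is equivalent to $\Gamma\ge(d+1)c$. This follows from $c\le\gamma_\alpha$ and $c\le\gamma_\alpha(d-1)/(M_\alpha-1)$ in eq. (\ref{cgamma}), summed over $\alpha$ and combined with $\sum_\alpha(M_\alpha-1)=d^2-1$:
\begin{equation*}
\Gamma=\sum_\alpha\gamma_\alpha\ge\frac{c}{d-1}\sum_\alpha(M_\alpha-1)=c\,(d+1).
\end{equation*}
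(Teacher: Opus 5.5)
Your proof is correct and is essentially the paper's argument: both use the map characterization $\Phi=\kappa_+d\Phi_0+\kappa_-\oper$ and pin down $\Phi$ on the span of the frame using the trace relations of Definition \ref{Def2}. The paper pairs $\Phi[X]$ with each $R_{\alpha,k}$, which by self-adjointness of $\Phi$ amounts to your evaluations $\Phi[\mathbb{I}_d]=\Gamma\mathbb{I}_d$ and $\Phi[T_{\alpha,k}]=c_\alpha T_{\alpha,k}$; the paper then obtains $\kappa_+\geq\kappa_-$, as you do, by summing $(M_\alpha-1)c\leq(d-1)\gamma_\alpha$ over $\alpha$. Your reading of the cardinality hypothesis is right and can be stated more bluntly: $d^2-1-N$ is a typo for $d^2+N-1$, which is the value the paper itself uses at the end of its proof and in the subsequent Theorem.
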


\begin{proof}
From Proposition \ref{IC}, we see that $\mathcal{R}$ spans the operator space on $\mathbb{C}^d$ if and only if $|\mathcal{R}|=d^2-1-N$.
Now, following the methods from ref. \cite{conical}, we first use $\mathcal{R}_\alpha$ to introduce the corresponding linear maps $\Phi_\alpha[X]=\sum_{k=1}^{M_\alpha}R_{\alpha,k}\Tr(R_{\alpha,k}X)$. Then, we calculate
\begin{equation}
\begin{split}
\Tr(\Phi_\alpha[X]R_{\alpha,\ell})=c_\alpha\Tr(R_{\alpha,\ell} X)
+\frac{\gamma_\alpha-c_\alpha}{d}w_{\alpha,\ell}\Tr(X).
\end{split}
\end{equation}
and, for any $\beta\neq\alpha$,
\begin{equation}
\begin{split}
\Tr(\Phi_\beta[X]R_{\alpha,\ell})&=\frac{\gamma_\beta}{d}w_{\alpha,\ell}\Tr(X).
\end{split}
\end{equation}
Hence, the total map $\Phi=\sum_{\alpha=1}^N\Phi_\alpha$ behaves as follows,
\begin{equation}\label{phi}
\Tr(\Phi[X]P_{\alpha,k})=c_\alpha\Tr(R_{\alpha,k}X)+\frac{w_{\alpha,\ell}}{d}
(\Gamma-c_\alpha)\Tr(X),
\end{equation}
where $\Gamma=\sum_{\beta=1}^N\gamma_\beta$. Equivalently, one has
\begin{equation}
\Tr\left\{\left[c_\alpha\oper+(\Gamma-c_\alpha)\Phi_0-\Phi\right][X]R_{\alpha,k}\right\}=0,
\end{equation}
This trace relation has to be satisfied for all $X$ and $R_{\alpha,k}$. Therefore, the total map sums up to
\begin{equation}
\Phi=c_\alpha\oper+(\Gamma-c_\alpha)\Phi_0.
\end{equation}
Now, if we compare this $\Phi$ with the map $\Phi=\kappa_+d\Phi_0+\kappa_-\oper$ from eq. (\ref{kap}), we see that $R_{\alpha,k}$ form a conical 2-design if and only if $c_\alpha\equiv c$ for all $\alpha=1,\ldots,N$. The characterizing constants $\kappa_\pm$ are then
\begin{equation}
\kappa_+=\frac{\Gamma-c}{d},\qquad \kappa_-=c.
\end{equation}
Obviously, $\kappa_->0$. Moreover, from eq. (\ref{CS}), it follows that
\begin{equation}
w_{\alpha,k}^2\geq \frac{d\kappa_-\gamma_\alpha}{(d-1)\gamma_\alpha+\kappa_-}.
\end{equation}
Taking the sum over $k$ results in
\begin{equation}
d\gamma_\alpha\geq\frac{d\kappa_-\gamma_\alpha M_\alpha}
{(d-1)\gamma_\alpha+\kappa_-},
\end{equation}
which can be rewritten into
\begin{equation}
(M_\alpha-1)\kappa_-\leq(d-1)\gamma_\alpha.
\end{equation}
Finally, taking the sum over $\alpha$ and recalling that $|\mathcal{R}|=\sum_{\alpha=1}^NM_\alpha=d^2+N-1$, we recover $\kappa_-\leq\kappa_+$.
\end{proof}

In ref. \cite{conical}, the inverse implication has been shown. Therefore, we have proven the following equivalence theorem.

\begin{Theorem}
Consider $N$ sets $\mathcal{R}_\alpha=\{R_{\alpha,k};\,k=1,\ldots,M_\alpha\}$ of linearly independent semi-positive operators such that $\sum_{k=1}^{M_\alpha}\Tr(R_{\alpha,k})R_{\alpha,k}=\gamma_\alpha\mathbb{I}_d$ for some $\gamma_\alpha>0$ and together span the operator space on $\mathbb{C}^d$. The collection $\mathcal{R}=\cup_{\alpha=1}^N\mathcal{R}_\alpha$ with $|\mathcal{R}|=d^2+N-1$ is a conical 2-design if and only if it is a maximal set of MU GEWFs characterized by $c_\alpha\equiv c$.
\end{Theorem}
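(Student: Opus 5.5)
The theorem has two directions; I will handle them separately.

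\emph{Sufficiency.} Assume $\mathcal{R}$ is a maximal set of MU GEWFs, so $|\mathcal{R}|=d^2+N-1$, with $c_\alpha\equiv c$. This is exactly the content of the preceding Proposition. By Proposition \ref{IC}, $\mathcal{R}$ spans the operator space. The map $\Phi=\sum_\alpha\Phi_\alpha$ is tested against every $R_{\alpha,k}$. Since these, together with $\mathbb{I}_d$, span the whole space, and the trace against $\mathbb{I}_d$ is fixed by $\sum_k w_{\alpha,k}R_{\alpha,k}=\gamma_\alpha\mathbb{I}_d$, this identifies
\[
\Phi=c\oper+(\Gamma-c)\Phi_0 .
\]
Equation (\ref{kap}) then gives a conical 2-design with $\kappa_-=c$ and $\kappa_+=(\Gamma-c)/d$. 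The inequality $\kappa_+\geq\kappa_-$ follows from summing (\ref{CS}) over $k$ and then over $\alpha$, as in that proof.

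\emph{Necessity.} Assume $\mathcal{R}$ is a conical 2-design with the stated frame property, so $\Phi=\kappa_-\oper+\kappa_+d\Phi_0$. I will apply $\Tr(\Phi[X]\,\cdot)$ with $X=R_{\beta,\ell}$, paired against $R_{\beta,\ell}$ and against the other elements, to extract all the trace relations in Definition \ref{Def2}.

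The key observation is that, within a single frame $\alpha$, the operators $R_{\alpha,k}-\frac{w_{\alpha,k}}{d}\mathbb{I}_d$ are traceless and satisfy
\[
\sum_k w_{\alpha,k}\Bigl(R_{\alpha,k}-\tfrac{w_{\alpha,k}}{d}\mathbb{I}_d\Bigr)=0 .
\]
Hence each traceless frame spans at most $M_\alpha-1$ dimensions. The total count is $\sum_\alpha(M_\alpha-1)=d^2-1$, and $\mathcal{R}$ spans the operator space, so the traceless subspaces $V_\alpha$ have exactly dimension $M_\alpha-1$ and form a direct sum decomposition of the traceless operators.

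Writing $\Phi$ restricted to traceless operators as $\sum_\alpha\Phi_\alpha|$, I need $\Phi=\kappa_-\oper$ there. The step I expect to be hardest is showing that the $V_\alpha$ are mutually Hilbert--Schmidt orthogonal, which is the mutual unbiasedness $\Tr(R_{\alpha,k}R_{\beta,\ell})=w_{\alpha,k}w_{\beta,\ell}/d$. I would try the following.
\begin{itemize}
\item The image of $\Phi_\alpha$ on traceless $X$ lies in $V_\alpha+\mathbb{C}\mathbb{I}_d$, since $\Phi_\alpha[X]=\sum_k R_{\alpha,k}\Tr(R_{\alpha,k}X)$.
\item For $X\in V_\beta$, the identity $\sum_\alpha\Phi_\alpha[X]=\kappa_-X$ together with the direct sum decomposition forces the $V_\alpha$-component of $\Phi_\alpha[X]$ to vanish for $\alpha\neq\beta$.
\item Because the operators $R_{\alpha,k}-\frac{w_{\alpha,k}}{d}\mathbb{I}_d$ have only the single linear relation above, that vanishing forces $\Tr(R_{\alpha,k}X)$ to be proportional to $w_{\alpha,k}$. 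Taking the weighted sum over $k$ gives $\sum_k w_{\alpha,k}\Tr(R_{\alpha,k}X)=\gamma_\alpha\Tr X=0$, so $\Tr(R_{\alpha,k}X)=0$, which is the orthogonality.
\end{itemize}

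With orthogonality in hand, $\Phi_\beta|_{V_\beta}=\kappa_-\oper$. Writing $R_{\beta,k}=\frac{w_{\beta,k}}{d}\mathbb{I}_d+T_k$, the Gram matrix of the $T_k$ must then equal $\kappa_-$ times a projector onto the complement of the weight vector. This yields
\[
\Tr(T_kT_\ell)=\kappa_-\Bigl(\delta_{k\ell}-\frac{w_{\beta,k}w_{\beta,\ell}}{d\gamma_\beta}\Bigr),
\]
which is precisely the MU GEWF relations with $c_\beta=\kappa_-$ for every $\beta$. Maximality holds by the dimension count. Alternatively, this direction can be cited from ref.~\cite{conical}, with the argument above serving as a self-contained check.
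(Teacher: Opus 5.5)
Your proposal is correct. The sufficiency half is the paper's own argument, namely the Proposition preceding the theorem: you test $\Phi=\sum_\alpha\Phi_\alpha$ against the spanning set to get $\Phi=c\,\oper+(\Gamma-c)\Phi_0$, then sum (\ref{CS}) to obtain $\kappa_+\geq\kappa_-$.

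The necessity half is where you genuinely depart from the paper. The paper gives no argument for it and simply cites ref.~\cite{conical}, while you prove it from scratch. Your argument runs as follows:
\begin{itemize}
\item The trace-weighted frame condition leaves exactly one linear relation, $\sum_k w_{\alpha,k}T_{\alpha,k}=0$, among the traceless parts $T_{\alpha,k}=R_{\alpha,k}-\frac{w_{\alpha,k}}{d}\mathbb{I}_d$.
\item The count $|\mathcal{R}|=d^2+N-1$, together with spanning, makes the $V_\alpha$ a direct-sum decomposition of the traceless operators. It is worth stating explicitly that $\sum_\alpha V_\alpha$ is all of them, because the whole operator space $\mathrm{span}\,\mathcal{R}$ is contained in $\mathbb{C}\mathbb{I}_d+\sum_\alpha V_\alpha$.
\item $\Phi_\alpha$ sends traceless operators into $V_\alpha\oplus\mathbb{C}\mathbb{I}_d$. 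Hence $\Phi=\kappa_-\oper$ on traceless operators forces the blocks to be Hilbert--Schmidt orthogonal, which is mutual unbiasedness with $f=1/d$.
\item Each Gram matrix then satisfies $G_\alpha^2=\kappa_-G_\alpha$, with kernel spanned by $w_\alpha$. This gives $G_\alpha=\kappa_-\bigl(\oper-w_\alpha w_\alpha^{T}/(d\gamma_\alpha)\bigr)$, so $c_\alpha=\kappa_-$ uniformly.
\end{itemize}

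What your route buys is a self-contained proof in which nothing depends on the operators within a frame having equal traces. The bare citation leaves the reader to check this, since the paper introduces ref.~\cite{conical} in the context of generalized equiangular measurements. Your argument also makes transparent where each hypothesis enters. It even shows that the within-frame linear-independence assumption in the theorem is redundant: $\{R_{\alpha,k}\}_k$ spans the $M_\alpha$-dimensional space $\mathbb{C}\mathbb{I}_d\oplus V_\alpha$, so these $M_\alpha$ operators are automatically independent. The paper's route buys only brevity.
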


Interestingly, the condition that $c_\alpha\equiv c$ for all $\alpha=1,\ldots,N$ is equivalent to uniform distance scaling between renormalized operators $R_{\alpha,k}$ from the same GEWF. To observe this, let us introduce $\widetilde{R}_{\alpha,k}=R_{\alpha,k}/w_{\alpha,k}$, so that $\Tr(\widetilde{R}_{\alpha,k})=1$. For a fixed $\alpha$, the Frobenius distance between $\widetilde{R}_{\alpha,k}$ is equal to
\begin{equation}
D_2^2(\widetilde{R}_{\alpha,k},\widetilde{R}_{\alpha,\ell})=\frac 12\|\widetilde{R}_{\alpha,k}-\widetilde{R}_{\alpha,\ell}\|_2^2
=\frac 12\Tr[(\widetilde{R}_{\alpha,k}-\widetilde{R}_{\alpha,\ell})^2]=
c_\alpha h_{\alpha;k,\ell},
\end{equation}
where $\|X\|_2=\sqrt{\Tr[X^\dagger X]}$ is the Frobenius norm of $X$ and
\begin{equation}
h_{\alpha;k,\ell}=\frac 12 \left(\frac{1}{w_{\alpha,k}^2}+\frac{1}{w_{\alpha,\ell}^2}\right)
\end{equation}
is a function of two weights. Now, if this distance depends on $h_{\alpha;k,\ell}$ rescaled by the same constant $c_\alpha\equiv c$, then the MU GEWFs form a conical 2-design. Notably, for $w_{\alpha,k}\equiv w_\alpha$, this result reproduces the equidistance property for MU GETFs \cite{GEAM_coherence}.

\begin{Example}
For the GEWFs from Example \ref{excon} ($i$--$iv$),
\begin{equation}
\sum_{k=1}^4R_k\otimes R_k=\frac{25}{41}
\begin{pmatrix}
25 & 0 & 0 & 0 \\
0 & 16 & 9 & 0 \\
0 & 9 & 16 & 0 \\
0 & 0 & 0 & 25
\end{pmatrix}.
\end{equation}
Hence, they are conical 2-designs with $\kappa_+=(\gamma-c)/d=400/41$ and $\kappa_-=c=225/41$.
\end{Example}

To illustrate what happens when the operators $R_k$ do not form an informationally complete set, we present the following example.

\begin{Example}
Take $d=2$, $M=3$, and $\gamma=2$. For the orthonormal traceless operators, we choose the normalized Pauli matrices $G_1=\sigma_1/\sqrt{2}$, $G_2=\sigma_3/\sqrt{2}$, so that they are not only Hermitian but also have real entries. The traces of $R_k$ are fixed in such a way that $w_1=w_3=1$ and $w_2=\sqrt{2}$. Then, using $H_k$ from Proposition 2 (i), we find
\begin{equation}
\begin{split}
R_1&=\frac{1}{2\sqrt{2}}
\begin{pmatrix}
\sqrt{2}(1-\sqrt{c}) & \sqrt{c} \\
\sqrt{c} & \sqrt{2}(1+\sqrt{c})
\end{pmatrix},\\
R_2&=\frac 12
\begin{pmatrix}
\sqrt{2} & -\sqrt{c} \\
-\sqrt{c} & \sqrt{2}
\end{pmatrix},\\
R_3&=\frac{1}{2\sqrt{2}}
\begin{pmatrix}
\sqrt{2}(1+\sqrt{c}) & \sqrt{c} \\
\sqrt{c} & \sqrt{2}(1-\sqrt{c})
\end{pmatrix}.
\end{split}
\end{equation}
These operators are semi-positive if and only if $0<c\leq 2/3$. Note that
\begin{equation}
\frac 14 \sum_{k=1}^3R_k\otimes R_k=\frac 18 
\begin{pmatrix}
2+c & 0 & 0 & c \\
0 & 2-c & c & 0 \\
0 & c & 2-c & 0 \\
c & 0 & 0 & 2+c
\end{pmatrix}
\end{equation}
is an X-shaped quantum state (X-state) \cite{Eberly2}.
\end{Example}

\begin{Example}
Take the MU GEWFs from Example \ref{Exak}.
Observe that
\begin{equation}
\phi_1=R_{1,1}\otimes R_{1,1}+R_{1,2}\otimes R_{1,2}=
\frac 13 \begin{pmatrix}
3 & 0 & 0 & 1 \\
0 & 3 & 1 & 0 \\
0 & 1 & 3 & 0 \\
1 & 0 & 0 & 3
\end{pmatrix}
\end{equation}
and
\begin{equation}
\phi_2=R_{1,1}\otimes R_{1,1}+R_{1,2}\otimes R_{1,2}=
\frac 13 \begin{pmatrix}
13 & 0 & 0 & -1 \\
0 & 11 & 1 & 0 \\
0 & 1 & 11 & 0 \\
-1 & 0 & 0 & 13
\end{pmatrix}
\end{equation}
are both X-shaped. Hence, neither $\mathcal{R}_1$ nor $\mathcal{R}_2$ is a conical 2-design. However, $\mathcal{R}=\mathcal{R}_1\cup\mathcal{R}_2$ is a conical 2-design with $\kappa_+=14/3$ and $\kappa_-=2/3$, which follows from the sum
\begin{equation}
\phi_1+\phi_2=\frac 13 \begin{pmatrix}
16 & 0 & 0 & 0 \\
0 & 14 & 2 & 0 \\
0 & 2 & 14 & 0 \\
0 & 0 & 0 & 16
\end{pmatrix}=\frac{14}{3}\mathbb{I}_2\otimes\mathbb{I}_2+\frac{2}{3}\mathbb{F}_2.
\end{equation}
\end{Example}

\section{Relation to quantum measurements}

In quantum information, the knowledge about quantum states is acquired through a measurement process. Measurement operators on qudits are represented by a positive, operator-valued measure (POVM), which is a collection $\mathcal{P}=\{P_k:\,k=1,\ldots,L\}$ of
positive semidefinite operators $P_k$ on $\mathbb{C}^d$ that are a resolution of the identity operator ($\sum_{k=1}^LP_k=\mathbb{I}_d$). Obviously, in general, MU GEWFs are not POVMs.

Consider a special case of MU GEWFs, where each GEWF consists of operators of equal traces. In other words, we require that $w_{\alpha,k}=w_\alpha$ for all $k=1,\ldots,M_\alpha$, so that $\sum_{k=1}^{M_\alpha}R_{\alpha,k}=\gamma_\alpha\mathbb{I}_d$ with $\gamma_\alpha=w_\alpha M_\alpha$. Then, all the operators $R_{\alpha,k}$ sum up to the identity operator provided that $\sum_{\alpha=1}^N\gamma_\alpha=1$. This way, we recover a generalized equiangular measurement (GEAM) \cite{GEAM}, whose applications and relations to conical 2-designs have been analyzed in refs. \cite{GEAM_coherence,GEAM_Pmaps,GEAM_kmaps,conical}.

Actually, any MU GEWFs define the POVM elements via
\begin{equation}\label{povm}
P_{\alpha,k}=w_{\alpha,k}R_{\alpha,k},
\end{equation}
where we impose the additional condition $\Gamma=\sum_{\alpha=1}^N\gamma_\alpha=1$ to avoid redefining $\gamma_\alpha$. However, a more general construction with $\Gamma\neq 1$ is straightforward, as it only requires an additional rescaling by $1/\Gamma$. This definition allows for unequal traces even among the operators in a single $\mathcal{P}_\alpha=\{P_{\alpha,k}:\,k=1,\ldots,M_\alpha\}$. Following eq. (\ref{def2}), we derive the trace properties with respect to $P_{\alpha,k}$;
\begin{equation}\label{def3}
\begin{split}
\Tr(P_{\alpha,k})&=a_{\alpha,k},\\
\Tr(P_{\alpha,k}^2)&=c_\alpha a_{\alpha,k}
+\frac{\gamma_\alpha-c_\alpha}{d\gamma_\alpha}a_{\alpha,k}^2,\\
\Tr(P_{\alpha,k}P_{\alpha,\ell})&=\frac{\gamma_\alpha-c_\alpha}{d\gamma_\alpha}a_{\alpha,k} a_{\alpha,\ell},\quad \ell\neq k,\\
\Tr(P_{\alpha,k}P_{\beta,\ell})&=\frac 1d a_{\alpha,k} a_{\beta,\ell},\quad \beta\neq \alpha,
\end{split}
\end{equation}
where $a_{\alpha,k}=w_{\alpha,k}^2$. Note that eqs. (\ref{cgamma}) and (\ref{CS}) translate into
\begin{equation}\label{ca}
0<c_\alpha\leq\gamma_\alpha\min\left\{\frac{d-1}{M_\alpha-1},1\right\},\qquad \gamma_\alpha=\frac 1d \sum_{k=1}^{M_\alpha}a_{\alpha,k},\qquad a_{\alpha,k}\geq \frac{dc_\alpha\gamma_\alpha}{(d-1)\gamma_\alpha+c_\alpha},
\end{equation}
and the elements of $\mathcal{P}_\alpha$ sum up to $\sum_{k=1}^{M_\alpha}P_{\alpha,k}=\gamma_\alpha\mathbb{I}_d$.

\begin{Remark}
If $c_\alpha\equiv c$ for all $\alpha=1,\ldots,N$, then the POVM $\mathcal{P}$ with elements $P_{\alpha,k}=w_{\alpha,k}R_{\alpha,k}$ is constructed from a conical 2-design. However, $\mathcal{P}$ is not a conical 2-design itself. Instead, its elements satisfy a similar relation,
\begin{equation}
\sum_{\alpha=1}^N\sum_{k=1}^{M_\alpha}\frac{1}{a_{\alpha,k}}P_{\alpha,k}\otimes P_{\alpha,k}=\kappa_+\mathbb{I}_d\otimes\mathbb{I}_d+\kappa_-\mathbb{F}_d
\end{equation}
with $\kappa_-=c$ and $\kappa_+=(1-c)/d$.
\end{Remark}

Now, if $|\mathcal{P}|=d^2+N-1$, when $P_{\alpha,k}$ form an informationally (over)complete set, and hence any quantum state $\rho$ can be expanded into
\begin{equation}\label{rho}
\rho=\sum_{\alpha=1}^N\sum_{k=1}^{M_\alpha}p_{\alpha,k}Q_{\alpha,k}.
\end{equation}
The nonnegative numbers $p_{\alpha,k}=\Tr(P_{\alpha,k}\rho)$ are probability outcomes, whereas $Q_{\alpha,k}$ form a frame dual to $P_{\alpha,k}$. For $N\geq 2$, the choice of a dual frame is ambiguous. Below, we present one possible choice.

\begin{Proposition}\label{PropQ}
The operators
\begin{equation}
Q_{\alpha,k}=\frac{1}{c_\alpha} \left(\frac{1}{a_{\alpha,k}}P_{\alpha,k}-\frac{1-c_\alpha}
{d}\mathbb{I}_d\right)
\end{equation}
form a dual frame to $P_{\alpha,k}$.
\end{Proposition}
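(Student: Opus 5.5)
The plan is to verify the reconstruction identity $\sum_{\alpha,k}\Tr(P_{\alpha,k}X)\,Q_{\alpha,k}=X$ for every operator $X$ on $\mathbb{C}^d$. For $X=\rho$ this is exactly eq. (\ref{rho}) with $p_{\alpha,k}=\Tr(P_{\alpha,k}\rho)$. Throughout I assume $\Gamma=1$ and $|\mathcal{P}|=d^2+N-1$, as in the surrounding text.

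First I substitute the given $Q_{\alpha,k}$ and use $P_{\alpha,k}=w_{\alpha,k}R_{\alpha,k}$ and $a_{\alpha,k}=w_{\alpha,k}^2$. This gives
\begin{equation*}
\frac{1}{a_{\alpha,k}}P_{\alpha,k}\Tr(P_{\alpha,k}X)=R_{\alpha,k}\Tr(R_{\alpha,k}X).
\end{equation*}
The first part of the sum therefore becomes $\sum_\alpha c_\alpha^{-1}\Phi_\alpha[X]$, with $\Phi_\alpha$ the maps introduced in the proof of the conical-design proposition. For the second part I use $\sum_k P_{\alpha,k}=\gamma_\alpha\mathbb{I}_d$, which gives $\sum_k\Tr(P_{\alpha,k}X)=\gamma_\alpha\Tr X$. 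Altogether,
\begin{equation*}
\sum_{\alpha,k}\Tr(P_{\alpha,k}X)\,Q_{\alpha,k}=\sum_{\alpha=1}^N\frac{1}{c_\alpha}\left(\Phi_\alpha[X]-(1-c_\alpha)\gamma_\alpha\frac{\Tr X}{d}\mathbb{I}_d\right).
\end{equation*}

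The key step, and the one I expect to be the main obstacle, is to identify each $\Phi_\alpha$ individually. I claim that
\begin{equation*}
\Phi_\alpha=\gamma_\alpha\Phi_0+c_\alpha\Pi_\alpha,
\end{equation*}
where $\Pi_\alpha$ is the Hilbert--Schmidt orthogonal projector onto the span of the traceless operators $G_{\alpha,k}$, $k=1,\ldots,M_\alpha-1$ (equivalently, onto the traceless part of $\mathrm{span}\,\mathcal{R}_\alpha$). I would establish this in three moves.
\begin{enumerate}[label=(\alph*)]
\item The range of $\Phi_\alpha$ lies in $\mathrm{span}\,\mathcal{R}_\alpha$, and $\Phi_\alpha$ annihilates everything orthogonal to that span. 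Hence $\Phi_\alpha$ is determined by the values $\Tr(\Phi_\alpha[X]R_{\alpha,\ell})$, which were already computed as $c_\alpha\Tr(R_{\alpha,\ell}X)+\frac{\gamma_\alpha-c_\alpha}{d}w_{\alpha,\ell}\Tr X$.
\item The candidate $\gamma_\alpha\Phi_0+c_\alpha\Pi_\alpha$ also has range in $\mathrm{span}\,\mathcal{R}_\alpha$, vanishes on its orthogonal complement, and reproduces these same traces. To check the last point, write $R_{\alpha,\ell}=\frac{w_{\alpha,\ell}}{d}\mathbb{I}_d+(\text{traceless part in }\mathrm{span}\,G_{\alpha,\cdot})$ via eq. (\ref{gen2}). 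This gives $\Tr(\Pi_\alpha[X]R_{\alpha,\ell})=\Tr(R_{\alpha,\ell}X)-\frac{w_{\alpha,\ell}}{d}\Tr X$, and the two expressions then match.
\item Since $\mathcal{R}_\alpha$ spans its own span, two self-adjoint maps with these properties that agree on all $R_{\alpha,\ell}$ coincide. Alternatively, one can expand $\Phi_\alpha$ directly using eq. (\ref{gen2}). The cross terms between $\mathbb{I}_d$ and the $H_{\alpha,k}$ drop out because of $\sum_k w_{\alpha,k}^2H_{\alpha,k}=0$, which is eq. (\ref{cond1}), and $\sum_k w_{\alpha,k}^2 H_{\alpha,k}\Tr(H_{\alpha,k}\,\cdot\,)$ must then equal $d\gamma_\alpha\Pi_\alpha$.
\end{enumerate}

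Inserting this form of $\Phi_\alpha$, the $\Phi_0$ contributions combine as
\begin{equation*}
\frac{\gamma_\alpha}{c_\alpha}-\frac{(1-c_\alpha)\gamma_\alpha}{c_\alpha}=\gamma_\alpha,
\end{equation*}
so the whole expression reduces to $\sum_\alpha\Pi_\alpha[X]+\Gamma\,\Phi_0[X]$. The sets $\{G_{\alpha,k}\}$ for different $\alpha$ are mutually orthogonal, and by Proposition \ref{IC} together with $|\mathcal{P}|=d^2+N-1$ they exhaust all traceless directions. Hence $\sum_\alpha\Pi_\alpha=\oper-\Phi_0$. With $\Gamma=1$ the expression equals $X$, which proves that $\{Q_{\alpha,k}\}$ is a dual frame to $\{P_{\alpha,k}\}$.
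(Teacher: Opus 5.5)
Your argument is correct, but it takes a different route from the paper's.

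The paper starts from the ansatz $Q_{\alpha,k}=x_{\alpha,k}P_{\alpha,k}-y_\alpha\mathbb{I}_d$, inserts it into (\ref{rho}), and evaluates $\Tr(P_{\beta,\ell}\rho)$ using only the trace table (\ref{def3}) and $\sum_k p_{\alpha,k}=\gamma_\alpha$. It chooses $x_{\beta,\ell}=1/(a_{\beta,\ell}c_\beta)$ so that the $p_{\beta,\ell}$ term reproduces itself. It chooses $y_\alpha=(1-c_\alpha)/(dc_\alpha)$ so that the remainder $\sum_\alpha\gamma_\alpha(1/c_\alpha-1-dy_\alpha)$ vanishes, which is where $\Gamma=1$ enters. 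The step from ``every outcome probability is reproduced'' to the operator identity is left implicit and rests on $\mathcal{P}$ spanning the operator space.

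You instead prove the superoperator identity $\sum_{\alpha,k}Q_{\alpha,k}\Tr(P_{\alpha,k}\,\cdot\,)=\oper$ directly, via the lemma $\Phi_\alpha=\gamma_\alpha\Phi_0+c_\alpha\Pi_\alpha$, and your steps (a)--(c) do establish it. A quicker check uses the traceless parts $T_{\alpha,k}=R_{\alpha,k}-\frac{w_{\alpha,k}}{d}\mathbb{I}_d$. They satisfy $\sum_k w_{\alpha,k}T_{\alpha,k}=0$, and their Gram matrix is $c_\alpha$ times the projector orthogonal to the weight vector. Hence $\sum_k T_{\alpha,k}\Tr(T_{\alpha,k}\,\cdot\,)=c_\alpha\Pi_\alpha$.

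The paper's route has the merit of \emph{deriving} the coefficients of $Q_{\alpha,k}$ rather than only verifying them, and it needs nothing beyond the trace relations. Your route shows exactly where $\Gamma=1$ and $|\mathcal{P}|=d^2+N-1$ are used, namely through $\sum_\alpha\Pi_\alpha=\oper-\Phi_0$. It also shows that, without informational completeness, the same $Q_{\alpha,k}$ still return the orthogonal projection of $X$ onto $\mathrm{span}\,\mathcal{P}$. Finally, it gives the general formula $\Phi=\Gamma\Phi_0+\sum_\alpha c_\alpha\Pi_\alpha$, which makes the conical 2-design condition $c_\alpha\equiv c$ transparent.

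One simplification: the mutual orthogonality of the traceless parts for $\alpha\neq\beta$ follows directly from $\Tr(R_{\alpha,k}R_{\beta,\ell})=w_{\alpha,k}w_{\beta,\ell}/d$. So you do not need to pass through a basis $G_{\alpha,k}$ at all.
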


\begin{proof}
Assume that the dual frame operators have the form
\begin{equation}
Q_{\alpha,k}=x_{\alpha,k}P_{\alpha,k}-y_\alpha\mathbb{I}_d
\end{equation}
and calculate the expectation value of $P_{\beta,\ell}$ in the state $\rho$ from eq. (\ref{rho}),
\begin{equation}
p_{\beta,\ell}=\Tr(P_{\beta,\ell}\rho)=\frac{a_{\beta,\ell}}{d}\left[\sum_{\alpha=1}^N\sum_{k=1}^{M_\alpha}
a_{\alpha,k}p_{\alpha,k}x_{\alpha,k}-\frac{c_\beta}{\gamma_\beta}
\sum_{k=1}^{M_\alpha}a_{\alpha,k}p_{\alpha,k}x_{\alpha,k}+dc_\beta p_{\beta,\ell}x_{\beta,\ell}
-d\sum_{\alpha=1}^N\gamma_\alpha y_\alpha\right].
\end{equation}
On the right hand-side, there is only one term that explicitly depends on $p_{\beta,\ell}$. If we take
\begin{equation}
x_{\beta,\ell}=\frac{1}{a_{\beta,\ell}c_\beta},
\end{equation}
then the terms $p_{\beta,\ell}$ cancel out, and hence we now have
\begin{equation}
\begin{split}
0&=\sum_{\alpha=1}^N\sum_{k=1}^{M_\alpha}
a_{\alpha,k}p_{\alpha,k}x_{\alpha,k}-\frac{c_\beta}{\gamma_\beta}
\sum_{k=1}^{M_\alpha}a_{\alpha,k}p_{\alpha,k}x_{\alpha,k}
-d\sum_{\alpha=1}^N\gamma_\alpha y_\alpha\\
&=\sum_{\alpha=1}^N\frac{\gamma_\alpha}{c_\alpha}-1-d\sum_{\alpha=1}^N\gamma_\alpha y_\alpha
=\sum_{\alpha=1}^N\gamma_\alpha\left(\frac{1}{c_\alpha}-1 -dy_\alpha\right).
\end{split}
\end{equation}
This sum vanishes if all its components are equal to zero for any $\gamma_\alpha$, from which we recover
\begin{equation}
y_\alpha=\frac{1-c_\alpha}{dc_\alpha}.
\end{equation}
\end{proof}

\section{Properties and applications}

\subsection{Index of coincidence}

Aside from the trace properties given in their definition, the MU GEWFs also satisfy a more refined property when traced with an arbitrary linear operator $X$ on $\mathbb{C}^d$.

\begin{Proposition}\label{XIOC}
If $\mathcal{R}$ is a conical 2-design, then
\begin{equation}\label{sumL}
\sum_{\beta=1}^L\sum_{\ell=1}^{M_\beta}|\Tr(R_{\beta,\ell}X)|^2
\leq\frac{\Gamma_L-c}{d}|\Tr(X)|^2-\kappa_-\Tr(X^\dagger X),
\end{equation}
where $\Gamma_L=\sum_{\alpha=1}^L\gamma_\alpha$.
The equality is reached for $L=N$, in which case
\begin{equation}\label{sumN}
\sum_{\beta=1}^N\sum_{\ell=1}^{M_\beta}|\Tr(R_{\beta,\ell}X)|^2
=\kappa_+|\Tr(X)|^2-\kappa_-\Tr(X^\dagger X).
\end{equation}
\end{Proposition}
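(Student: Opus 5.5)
The plan is to read the left-hand side of (\ref{sumL}) as a quadratic form generated by the maps $\Phi_\alpha[X]=\sum_{k}R_{\alpha,k}\Tr(R_{\alpha,k}X)$ from the proof of the conical 2-design Proposition. Since the $R_{\beta,\ell}$ are Hermitian, $|\Tr(R_{\beta,\ell}X)|^2=\Tr(R_{\beta,\ell}X)\Tr(R_{\beta,\ell}X^\dagger)$. Summing over $\ell$ gives
\[
\sum_{\ell=1}^{M_\beta}|\Tr(R_{\beta,\ell}X)|^2=\Tr\big(X^\dagger\,\Phi_\beta[X]\big),
\]
and equivalently $\Tr\big[(X^\dagger\otimes X)\sum_\ell R_{\beta,\ell}\otimes R_{\beta,\ell}\big]$. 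I would first handle the equality case $L=N$. Here I insert eq.~(\ref{conic}) with the constants (\ref{kappas}), or equivalently $\Phi=\kappa_+d\Phi_0+\kappa_-\oper$ from eq.~(\ref{kap}). The $\mathbb{I}_d\otimes\mathbb{I}_d$ term gives $\kappa_+\Tr(X^\dagger)\Tr(X)=\kappa_+|\Tr(X)|^2$. The $\mathbb{F}_d$ term gives a multiple of $\Tr(X^\dagger X)$ through $\Tr[(A\otimes B)\mathbb{F}_d]=\Tr(AB)$. This produces (\ref{sumN}).

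For $L<N$, I would decompose each $\Phi_\beta$ separately. From the trace relations in the proof of the conical 2-design Proposition, $\Phi_\beta[X]$ lies in $V_\beta=\mathrm{span}\{R_{\beta,k}\}$, and it reproduces $\Tr(\Phi_\beta[X]R_{\beta,\ell})=c\Tr(R_{\beta,\ell}X)+\frac{\gamma_\beta-c}{d}w_{\beta,\ell}\Tr(X)$. Together with $\sum_k w_{\beta,k}R_{\beta,k}=\gamma_\beta\mathbb{I}_d$, this should identify
\[
\Phi_\beta=c\,\Pi_\beta+(\gamma_\beta-c)\Phi_0 ,
\]
where $\Pi_\beta$ is the Hilbert--Schmidt projector onto the traceless part of $V_\beta$. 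By Proposition~\ref{IC} and the orthogonality $\Tr(H_{\alpha,k}H_{\beta,\ell})=0$ for $\alpha\neq\beta$, these traceless subspaces are mutually orthogonal. Their direct sum over all $\beta$ is the whole traceless sector, which is consistent with the $L=N$ identity.

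Summing over $\beta\le L$ then gives $\Tr(X^\dagger\Phi_L[X])$ with $\Phi_L=c\sum_{\beta\le L}\Pi_\beta+(\Gamma_L-Lc)\Phi_0$. Writing $X=\frac{\Tr X}{d}\mathbb{I}_d+X_0$, the $\Phi_0$ part contributes a multiple of $|\Tr X|^2$. The projector part is bounded using $\sum_{\beta\le L}\Pi_\beta\le\sum_{\beta\le N}\Pi_\beta$ together with $\Tr(X_0^\dagger X_0)=\Tr(X^\dagger X)-|\Tr X|^2/d$. Regrouping the $|\Tr X|^2$ coefficients should give $\frac{\Gamma_L-c}{d}$, with the $\Tr(X^\dagger X)$ coefficient matching the one from the $L=N$ case. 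Equality holds exactly when the omitted projectors annihilate $X_0$, and in particular for $L=N$.

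The main obstacle is the sign and coefficient bookkeeping of the $\kappa_-$ term. Two points need care: which factor carries the adjoint inside $\Tr[(X^\dagger\otimes X)\mathbb{F}_d]$, and how the $-|\Tr X|^2/d$ coming from $X_0$ redistributes between the two terms. Before committing to the final form, I would check the result against $X=\mathbb{I}_d$, where the left side equals $\sum w_{\beta,\ell}^2=d\Gamma_L$. I would also check it against the explicit qubit conical 2-design of Example~\ref{excon}, so that the sign in front of $\kappa_-\Tr(X^\dagger X)$ in (\ref{sumL})--(\ref{sumN}) is fixed consistently with eq.~(\ref{conic}). If these checks do not reproduce the minus sign as printed in (\ref{sumL})--(\ref{sumN}), I would report the coefficient the computation actually yields rather than force the stated form.
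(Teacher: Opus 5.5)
Your strategy works, but its central identity is wrong as written, and so is the claim that it is consistent with the $L=N$ case.

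\textbf{The misstated identity.} The relation $\Tr(\Phi_\beta[X]R_{\beta,\ell})=c\Tr(R_{\beta,\ell}X)+\frac{\gamma_\beta-c}{d}w_{\beta,\ell}\Tr(X)$, together with $\Phi_\beta[X]\in V_\beta$, gives $\Phi_\beta=c\,P_{V_\beta}+(\gamma_\beta-c)\Phi_0$. Here $P_{V_\beta}$ is the Hilbert--Schmidt projector onto the \emph{whole} span $V_\beta$, which contains $\mathbb{I}_d$. Since $P_{V_\beta}=\Pi_\beta+\Phi_0$, the correct decomposition is $\Phi_\beta=c\,\Pi_\beta+\gamma_\beta\Phi_0$.

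Your version $c\,\Pi_\beta+(\gamma_\beta-c)\Phi_0$ fails in three ways:
\begin{itemize}
\item On $X=\mathbb{I}_d$ it returns $(\gamma_\beta-c)\mathbb{I}_d$ instead of $\sum_kw_{\beta,k}R_{\beta,k}=\gamma_\beta\mathbb{I}_d$.
\item Summed over all $\beta$ it gives $c\oper+(\Gamma-(N+1)c)\Phi_0$, not $\Phi=c\oper+(\Gamma-c)\Phi_0$.
\item Your $\Phi_L=c\sum_{\beta\le L}\Pi_\beta+(\Gamma_L-Lc)\Phi_0$ yields the coefficient $\frac{\Gamma_L-(L+1)c}{d}$ on $|\Tr X|^2$. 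The ``regrouping'' step therefore cannot produce $\frac{\Gamma_L-c}{d}$.
\end{itemize}
Your planned $X=\mathbb{I}_d$ check would have caught this.

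\textbf{The corrected bound.} With $\Phi_L=c\sum_{\beta\le L}\Pi_\beta+\Gamma_L\Phi_0$ and Bessel's inequality for the mutually orthogonal $\Pi_\beta$,
\[
\sum_{\beta=1}^L\sum_{\ell=1}^{M_\beta}|\Tr(R_{\beta,\ell}X)|^2=c\sum_{\beta=1}^L\|\Pi_\beta X\|_2^2+\frac{\Gamma_L}{d}|\Tr X|^2\leq\frac{\Gamma_L-c}{d}|\Tr X|^2+c\,\Tr(X^\dagger X).
\]
Equality holds iff $\Pi_\beta X=0$ for all $\beta>L$, hence always for $L=N$.

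\textbf{The sign.} This settles your sign question: the correct term is $+\kappa_-\Tr(X^\dagger X)$. That is exactly what the paper's own proof reaches in its last lines. The minus sign in the printed statement is a misprint: for traceless $X\neq0$ the printed right-hand side is negative while the left-hand side is not.

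\textbf{Block orthogonality.} You do not need the operators $H_{\alpha,k}$ from the construction. The fourth line of eq.~(\ref{def2}) directly gives $\Tr[(R_{\alpha,k}-\frac{w_{\alpha,k}}{d}\mathbb{I}_d)(R_{\beta,\ell}-\frac{w_{\beta,\ell}}{d}\mathbb{I}_d)]=0$ for $\alpha\neq\beta$. This is preferable, because the paper does not show that every MU GEWF arises from that construction.

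\textbf{Comparison with the paper's route.} The paper rescales to the POVM $P_{\alpha,k}=w_{\alpha,k}R_{\alpha,k}$ with $\Gamma=1$ and expands $X$ in the dual frame $Q_{\alpha,k}$ of Proposition~\ref{PropQ}. It then computes the block sums, eq.~(\ref{lsum}), and $\Tr(X^\dagger X)$, eq.~(\ref{XX}), in the coefficients $x_{\alpha,k}$. For $L<N$ it adds the omitted blocks' terms $\sum_\ell|x_{\beta,\ell}|^2/a_{\beta,\ell}-|x_\beta|^2/(d\gamma_\beta)$. These are nonnegative by Cauchy--Schwarz, which the paper leaves implicit, and they are precisely your $c\|\Pi_\beta X\|_2^2$.

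Your coordinate-free route has four advantages:
\begin{itemize}
\item it avoids the dual frame;
\item it avoids the non-uniqueness of the coefficients when $N\geq2$;
\item it avoids the normalization $\Gamma=1$;
\item it makes the dropped terms manifestly nonnegative and pins down the equality cases.
\end{itemize}
In exchange, the paper's route stays within the POVM/dual-frame formalism it reuses for the index of coincidence.
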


\begin{proof}
Any operator $X$ on $\mathbb{C}^d$ can be expanded in the frame $Q_{\alpha,k}$ from Proposition \ref{PropQ}, so that
\begin{equation}
X=\sum_{\alpha=1}^N\sum_{k=1}^{M_\alpha}x_{\alpha,k}Q_{\alpha,k}
\end{equation}
with complex coefficients $x_{\alpha,k}$ that fully characterize $X$. Using the trace properties of $Q_{\alpha,k}$, namely $\Tr(Q_{\alpha,k})=1$ and
\begin{equation}\label{QQ}
\Tr(Q_{\alpha,k}Q_{\beta,\ell})=\frac 1d+\frac{1}{c_\alpha c_\beta}\left(
\frac{\Tr(P_{\alpha,k}P_{\beta,\ell})}{a_{\alpha,k}a_{\beta,\ell}}-\frac 1d\right)=\frac 1d +\frac{\delta_{\alpha\beta}}{c_\alpha}\left(
\frac{\delta_{k\ell}}{a_{\alpha,k}}-\frac{1}{d\gamma_\alpha}\right),
\end{equation}
we derive the formula
\begin{equation}
\Tr(XP_{\beta,\ell})=x_{\beta,\ell}+\frac{a_{\beta,\ell}}{d\gamma_\beta}
[\gamma_\beta \Tr(X)-x_\beta],
\end{equation}
where $x_\beta\equiv\sum_{k=1}^{M_\beta}x_{\beta,\ell}$ and $\Tr(X)=\sum_{\beta=1}^Nx_{\beta}$. Now, direct calculations lead to
\begin{equation}\label{lsum}
\sum_{\ell=1}^{M_\beta}|\Tr(XR_{\beta,\ell})|^2
=\sum_{\ell=1}^{M_\beta}\frac{1}{a_{\beta,\ell}}|\Tr(XP_{\beta,\ell})|^2
=\frac{\gamma_\beta}{d}|\Tr(X)|^2-\frac{|x_\beta|^2}{d\gamma_\beta}+\sum_{\ell=1}^{M_\beta}
\frac{|x_{\beta,\ell}|^2}{a_{\beta,\ell}},
\end{equation}
after recalling from eq. (\ref{povm}) that $P_{\alpha,k}=\sqrt{a_{\alpha,k}}R_{\alpha,k}$.
Next, to express the right hand-side in terms of $X$, we first note that
\begin{equation}\label{QP}
\Tr(Q_{\alpha,k}P_{\alpha,k})=\delta_{k\ell}-\frac{(1-\gamma_\alpha)a_{\alpha,k}}
{d\gamma_\alpha},\qquad
\Tr(Q_{\alpha,k}P_{\beta,\ell})=\frac{a_{\beta,\ell}}{d}\quad (\alpha\neq\beta),
\end{equation}
and then
\begin{equation}\label{XX}
\Tr(X^\dagger X)=\frac 1d \left[|\Tr(X)|^2-\sum_{\alpha=1}^N\frac{|x_\alpha|^2}{c_\alpha \gamma_\alpha}\right]
+\sum_{\alpha=1}^N\sum_{k=1}^{M_\alpha}\frac{|x_{\alpha,k}|^2}{a_{\alpha,k}c_\alpha}.
\end{equation}
Therefore, if $c_\alpha\equiv c$, then eq. (\ref{XX}) gives
\begin{equation}
-\sum_{\alpha=1}^N\frac{|x_\alpha|^2}{d\gamma_\alpha}
+\sum_{\alpha=1}^N\sum_{k=1}^{M_\alpha}\frac{|x_{\alpha,k}|^2}{a_{\alpha,k}}=c\Tr(X^\dagger X)-\frac{c}{d}|\Tr(X)|^2.
\end{equation}
Finally, eq. (\ref{lsum}) summed over $\beta=1,\ldots,L\leq N$ further simplifies to
\begin{equation}
\begin{split}
\sum_{\beta=1}^L\sum_{\ell=1}^{M_\beta}|\Tr(XR_{\beta,\ell})|^2
&=\sum_{\beta=1}^L\left[\frac{\gamma_\beta}{d}|\Tr(X)|^2-\frac{|x_\beta|^2}{d\gamma_\beta}
+\sum_{\ell=1}^{M_\beta}\frac{|x_{\beta,\ell}|^2}{a_{\beta,\ell}}\right]\\
&\leq 
\sum_{\beta=1}^L\frac{\gamma_\beta}{d}|\Tr(X)|^2+
\sum_{\beta=1}^N\left[-\frac{|x_\beta|^2}{d\gamma_\beta}
+\sum_{\ell=1}^{M_\beta}\frac{|x_{\beta,\ell}|^2}{a_{\beta,\ell}}\right]\\
&=\frac{\Gamma_L}{d}|\Tr(X)|^2+c\Tr(X^\dagger X)-\frac{c}{d}|\Tr(X)|^2\\
&=\frac{\Gamma_L-c}{d}|\Tr(X)|^2+c\Tr(X^\dagger X),
\end{split}
\end{equation}
where $\Gamma_L=\sum_{\alpha=1}^N\gamma_\alpha$. The equality is reached for $L=N$, and then
\begin{equation}
\sum_{\beta=1}^L\sum_{\ell=1}^{M_\beta}|\Tr(XR_{\beta,\ell})|^2
=\kappa_+|\Tr(X)|^2+\kappa_-\Tr(X^\dagger X).
\end{equation}
\end{proof}

This generalizes the results from refs. \cite{GEAM,GEAM_Pmaps,GEAM_kmaps} obtained for MU GETFs.
Now, consider a special case, where $X=\rho$ is the density operator and $P_{\alpha,k}=w_{\alpha,k}R_{\alpha,k}$ is the POVM from eq. (\ref{def3}).
Then, Proposition \ref{XIOC} allows us to find an upper bound for the partial index of coincidence
\begin{equation}
\mathcal{C}_L(\rho)=\sum_{\alpha=1}^N\sum_{k=1}^{M_\alpha}p_{\alpha,k}^2,\qquad
p_{\alpha,k}=\Tr(P_{\alpha,k}\rho).
\end{equation}
Indeed, eq. (\ref{sumL}) reduces to
\begin{equation}
\mathcal{C}_L(\rho)\leq\sum_{\alpha=1}^L\sum_{k=1}^{M_\alpha}\frac{p_{\alpha,k}^2}{a_{\alpha,k}}
\leq\frac{\Gamma_L-c}{d}+\kappa_-\Tr(\rho^2),
\end{equation}
where the first inequality follows from eq. (\ref{ca}), as $1=\sum_{\alpha=1}^N\sum_{k=1}^{M_\alpha}a_{\alpha,k}\geq a_{\beta,\ell}$. For $L=N$, one recovers the index of coincidence $\mathcal{C}(\rho)\equiv \mathcal{C}_N(\rho)$, and hence eq. (\ref{sumN}) gives
\begin{equation}
\mathcal{C}(\rho)\leq\kappa_++\kappa_-\Tr(\rho^2).
\end{equation}
Note that, unlike for the MU GETFs, conical 2-designs with unequal traces among $R_{\alpha,k}\in\mathcal{R}_\alpha$ never saturate the above inequality. This is the price on has to pay for less symmetric operators.

\subsection{Variance}

Observe that $R_{\alpha,k}$ with weights $w_{\alpha,k}=\Tr(R_{\alpha,k})$ provide a resolution of the rescaled identity operator,
\begin{equation}
\sum_{\alpha=1}^N\sum_{k=1}^{M_\alpha}w_{\alpha,k}R_{\alpha,k}=(d\kappa_++\kappa_-) \mathbb{I}_d.
\end{equation}
Interestingly, a similar resolution follows for $R_{\alpha,k}^2$.

\begin{Proposition}
If $\mathcal{R}$ is a conical 2-design, then the sum of its squared operators is proportional to the identity operator,
\begin{equation}\label{sum}
\sum_{\alpha=1}^N\sum_{k=1}^{M_\alpha}R_{\alpha,k}^2=
(\kappa_++d\kappa_-)\mathbb{I}_d.
\end{equation}
\end{Proposition}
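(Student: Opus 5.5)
The plan is to obtain the identity directly from the defining conical 2-design relation, eq. (\ref{conic}), by applying a partial contraction that converts $E_k\otimes E_k$ into $E_k^2$. Write $S=\sum_{\alpha=1}^N\sum_{k=1}^{M_\alpha}R_{\alpha,k}\otimes R_{\alpha,k}$. By the definition of a conical 2-design, with the constants $\kappa_\pm$ from eq. (\ref{kappas}), we have $S=\kappa_+\mathbb{I}_d\otimes\mathbb{I}_d+\kappa_-\mathbb{F}_d$.

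The key tool is the elementary identity $\Tr_2[(A\otimes B)\mathbb{F}_d]=AB$ for any operators $A,B$ on $\mathbb{C}^d$. I would verify it on matrix units, or by expanding $\mathbb{F}_d=\sum_{m,n}|m\>\<n|\otimes|n\>\<m|$, which gives $\sum_{m,n}A|m\>\<n|\,\<m|B|n\>=AB$.

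The steps are then as follows.
\begin{enumerate}[label=(\it\roman*)]
\item Multiply both sides of the design relation on the right by $\mathbb{F}_d$ and take the partial trace over the second factor.
\item On the left-hand side this gives $\sum_{\alpha,k}R_{\alpha,k}^2$ by the identity above.
\item On the right-hand side, use $\mathbb{F}_d^2=\mathbb{I}_d\otimes\mathbb{I}_d$. This yields $\kappa_+\Tr_2\mathbb{F}_d+\kappa_-\Tr_2(\mathbb{I}_d\otimes\mathbb{I}_d)=\kappa_+\mathbb{I}_d+d\kappa_-\mathbb{I}_d$, since $\Tr_2\mathbb{F}_d=\mathbb{I}_d$.
\end{enumerate}
Together these give eq. (\ref{sum}).

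As a cross-check, and as an alternative route through the map language of eq. (\ref{kap}), one can pass to an orthonormal basis $\{|m\>\}$. We have $\sum_{\alpha,k}R_{\alpha,k}^2=\sum_{m}\sum_{\alpha,k}R_{\alpha,k}|m\>\<m|R_{\alpha,k}$. In terms of $\Phi$, the $(i,j)$ matrix element of this sum equals $\sum_m\<i|\Phi[|m\>\<j|]|m\>$. Inserting $\Phi=\kappa_+d\Phi_0+\kappa_-\oper$ gives $\kappa_+\delta_{ij}+d\kappa_-\delta_{ij}$, the same result.

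One could also verify consistency by taking the trace of both sides. The left side gives $\sum_{\alpha,k}\big(c+\tfrac{\gamma_\alpha-c}{d\gamma_\alpha}w_{\alpha,k}^2\big)=c|\mathcal{R}|+(\Gamma-Nc)/d$. With $|\mathcal{R}|=d^2+N-1$ and $\kappa_+=(\Gamma-c)/d$, $\kappa_-=c$, this matches $d\kappa_++d^2\kappa_-$.

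There is no real obstacle here. The only point requiring care is the partial-trace identity with the flip operator, specifically the ordering $AB$ versus $BA$, which is irrelevant here since both factors are the same $R_{\alpha,k}$. One must also use the full union over all $\alpha$, because individual GEWFs need not be conical 2-designs, as the earlier qubit example shows.
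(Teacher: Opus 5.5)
Your main argument is correct, and it takes a different route from the paper. The paper works through the construction. It takes the orthonormal Hermitian basis $\{\mathbb{I}_d/\sqrt{d},G_{\alpha,k}\}$ associated with $\mathcal{R}$ (Proposition \ref{4Gk} and eq. (\ref{Gak})) and uses $\sum_jG_j^2=d\mathbb{I}_d$ to get $\sum_{\alpha}\sum_{k=1}^{M_\alpha-1}G_{\alpha,k}^2=\frac{d^2-1}{d}\mathbb{I}_d$. It then rewrites the left side as $\sum_\alpha c_\alpha^{-1}\big(\sum_{k}R_{\alpha,k}^2-\frac{\gamma_\alpha}{d}\mathbb{I}_d\big)$, and only at the end sets $c_\alpha\equiv c$ and invokes eq. (\ref{kappas}).

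You instead contract the defining relation (\ref{conic}) directly with $\Tr_2[(\cdot)\,\mathbb{F}_d]$. Your route is basis-free and needs neither the reconstruction formula nor the MU GEWF trace relations, so it proves the identity for every conical 2-design. The paper's route yields a by-product: a formula for $\sum_{\alpha,k}R_{\alpha,k}^2$ that holds for informationally overcomplete MU GEWFs with unequal $c_\alpha$, which are not conical 2-designs. The underlying mechanism is the same in both: $\sum_jG_j^2=d\mathbb{I}_d$ is just your contraction applied to $\sum_jG_j\otimes G_j=\mathbb{F}_d$.

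Two of your side checks contain slips, though neither affects the main proof.
\begin{itemize}
\item In the trace check, $\sum_k\frac{\gamma_\alpha-c}{d\gamma_\alpha}w_{\alpha,k}^2=\gamma_\alpha-c$. The left side is therefore $c|\mathcal{R}|+\Gamma-Nc$, not $c|\mathcal{R}|+(\Gamma-Nc)/d$. Only the corrected value equals $d\kappa_++d^2\kappa_-$.
\item In the map cross-check, $\sum_m\langle i|\Phi[|m\rangle\langle j|]|m\rangle=\sum_{\alpha,k}\sum_m\langle i|R_{\alpha,k}|m\rangle\langle j|R_{\alpha,k}|m\rangle$ is the $(i,j)$ entry of $\sum_{\alpha,k}R_{\alpha,k}R_{\alpha,k}^{T}$. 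Inserting eq. (\ref{kap}) then gives $(\kappa_++\kappa_-)\delta_{ij}$. The expression you want is $\sum_m\langle i|\Phi[|j\rangle\langle m|]|m\rangle$, which does give $(\kappa_++d\kappa_-)\delta_{ij}$.
\end{itemize}
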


\begin{proof}
Recall that if $\{G_j;\,j=0,\ldots,d^2-1\}$ is an orthonormal operator basis, then \cite{Werner_basis,Ohno}
\begin{equation}\label{GG}
\sum_{k=0}^{d^2-1}G_k^2=d\mathbb{I}_d.
\end{equation}
In our case, the informationally (over)complete set $\mathcal{R}$ is constructed from the Hermitian orthonormal operator basis $\{G_0=\mathbb{I}_d/\sqrt{d},\,G_{\alpha,k}:\,k=1,\ldots,M_\alpha-1;\,\alpha=1,\ldots,N\}$. Therefore, eq. (\ref{GG}) reduces to
\begin{equation}
\sum_{\alpha=1}^N\sum_{k=1}^{M_\alpha-1}G_{\alpha,k}^2=\frac{d^2-1}{d}\mathbb{I}_d.
\end{equation}
Now, let us assume that $R_{\alpha,k}$ are constructed from $G_{\alpha,k}$ via eq. (\ref{Gak}) (note that it does not matter which construction is chosen). Then,
\begin{equation}
\sum_{\alpha=1}^N\sum_{k=1}^{M_\alpha-1}R_{\alpha,k}^2=\sum_{\alpha=1}^N
\frac{1}{c_\alpha}\left(\sum_{k=1}^{M_\alpha}R_{\alpha,k}^2-\frac{\gamma_\alpha}{d} \mathbb{I}_d\right).
\end{equation}
This is a general formula for any MU GEWF and cannot be simplified further. However, if $R_{\alpha,k}$ form a conical 2-design, then $c_\alpha\equiv c$, and hence
\begin{equation}
\sum_{\alpha=1}^N\sum_{k=1}^{M_\alpha-1}R_{\alpha,k}^2
=\frac{(d^2-1)c+\Gamma}{d}\mathbb{I}_d.
\end{equation}
Using eq. (\ref{kappas}), we recover eq. (\ref{sum}).
\end{proof}

The above results are useful in calculating the total variance. It is one of the Brukner-Zeilinger invariants \cite{BZI1,BZI2}, defined as the sum of variances over an informatially complete orthogonal set of observables $\mathcal{A}=\{A_j:\,j=0,\ldots,d^2-1\}$ \cite{BZI3},
\begin{equation}\label{var}
\mathcal{V}(\mathcal{A},\rho)=\sum_{j=0}^{d^2-1}\left[\Tr(\rho A_j^2)-(\Tr\rho A_j)^2\right]=d-\Tr(\rho^2).
\end{equation}
Obviously, $\mathcal{V}(\mathcal{A},\rho)$ is invariant with respect to an arbitrary unitary transformation of the quantum state $\rho$. Here, we take an informationally overcomplete set of MU GEWFs that form a conical 2-design and show that an analogical property holds. Indeed, from definition,
\begin{equation}
\mathcal{V}(\mathcal{R},\rho)=\sum_{\alpha=1}^{N}\sum_{k=1}^{M_\alpha}
\left[\Tr(\rho R_{\alpha,k}^2)-(\Tr\rho R_{\alpha,k})^2\right]
=\kappa_-[d-\Tr(\rho^2)],
\end{equation}
where we used eqs. (\ref{sumN}) and (\ref{sum}).
Hence, the linear relation between the variance $\mathcal{V}(\mathcal{R},\rho)$ and purity $\Tr(\rho^2)$ is preserved. Actually, compared to eq. (\ref{var}), the right hand-side is rescaled by the positive factor $\kappa_-$, which multiplies the flip operator in the definition of conical 2-designs.

\subsection{Entanglement quantification}

For bipartite quantum states $\rho$ on $\mathbb{C}^d\otimes\mathbb{C}^d$, the Schmidt number $r$ measures the entanglement degree, from $r=1$ for separable to $r=d$ for maximally entangled states \cite{Sperling,TOPICAL}. The Schmidt number is defined as \cite{Terhal}
\begin{equation}
{\rm{SN}}(\rho)=\min_{\rho=\sum_jp_j|\psi_j\>\<\psi_j|}\max_j{\rm{SR}}(|\psi_j\>),
\end{equation}
where $\rm{SR}(|\psi_j\>)$ is the Schmidt rank of $|\psi_j\>$ \cite{Guhne} for the fixed decomposition $\rho=\sum_jp_j|\psi_j\>\<\psi_j|$. The Schmidt rank $r$ for a pure state $|\psi\>\in\mathbb{C}^d$ is recovered after reducing the state to the form
\begin{equation}
|\psi\>=\sum_{j=0}^{r-1}\lambda_j|e_j\>\otimes|f_j\>,
\end{equation}
where $\{|e_j\>;\,j=0,\ldots,d-1\}$, $\{|f_j\>;\,j=0,\ldots,d-1\}$ are local orthonormal bases, and $\lambda_j>0$ such that $\sum_{j=0}^{d-1}\lambda_j^2=1$ are the Schmidt coefficients.

Here, we obtain the Schmidt number criterion using the correlation matrix. From definition, the corelation matrix $\mathcal{B}(\rho)$ has the elements
\begin{equation}
\mathcal{B}_{\alpha,k;\beta,\ell}=\Tr[\rho(R_{\alpha,k}\otimes R_{\beta,\ell})]
\end{equation}
The entanglement measure of the density operator $\rho$ follows from the trace norm $\|\mathcal{B}(\rho)\|_{\tr}=\Tr\sqrt{\mathcal{B}^\dagger(\rho) \mathcal{B}(\rho)}$.

\begin{Proposition}
If the Schmidt number of a bipartite state $\rho$ is no greater than $r$, then
\begin{equation}
\|\mathcal{B}(\rho)\|_{\tr}\leq \kappa_++r\kappa_-.
\end{equation}
\end{Proposition}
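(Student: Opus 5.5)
Since the trace norm is convex, and the correlation matrix $\mathcal{B}(\rho)$ depends linearly on $\rho$, we may write $\rho=\sum_jp_j|\psi_j\>\<\psi_j|$ as an optimal decomposition with every Schmidt rank at most $r$. Then
\[
\|\mathcal{B}(\rho)\|_{\tr}\leq\sum_jp_j\|\mathcal{B}(|\psi_j\>\<\psi_j|)\|_{\tr},
\]
so it suffices to prove the bound for a pure state $|\psi\>=\sum_{j=0}^{r-1}\lambda_j|e_j\>\otimes|f_j\>$ of Schmidt rank at most $r$.

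For such a state the correlation matrix factorizes. Define $\mathcal{B}_{\alpha,k;\beta,\ell}=\<\psi|R_{\alpha,k}\otimes R_{\beta,\ell}|\psi\>$. Writing $|\psi\>=(\mathbb{I}_d\otimes\Psi)|\phi^+\>$, with $\Psi$ the coefficient matrix $\Psi=\sum_j\lambda_j|f_j\>\<\bar e_j|$ (up to the usual transpose conventions), we get
\[
\mathcal{B}_{\alpha,k;\beta,\ell}=\Tr\big(R_{\alpha,k}^T\,\Psi^\dagger R_{\beta,\ell}\Psi\big).
\]
Hence $\mathcal{B}=\sum_{j,j'}\lambda_j\lambda_{j'}\,u_{jj'}v_{jj'}^\dagger$, where $(u_{jj'})_{\alpha,k}=\<e_j|R_{\alpha,k}|e_{j'}\>$ and $(v_{jj'})_{\beta,\ell}=\overline{\<f_j|R_{\beta,\ell}|f_{j'}\>}$. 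This is a sum of $r^2$ rank-one matrices. The triangle inequality then gives
\[
\|\mathcal{B}\|_{\tr}\leq\sum_{j,j'}\lambda_j\lambda_{j'}\,\|u_{jj'}\|\,\|v_{jj'}\|.
\]

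Each vector norm comes from Proposition \ref{XIOC} applied to $X=|e_j\>\<e_{j'}|$. For $j=j'$ we have $\Tr(X)=1$ and $\Tr(X^\dagger X)=1$; for $j\neq j'$ we have $\Tr(X)=0$ and $\Tr(X^\dagger X)=1$. Equation (\ref{sumN}) with the sign taken from the end of its proof (the $+\kappa_-$ version, which is the correct conical-2-design identity $\Tr(\Phi[X]X^\dagger)=\kappa_+|\Tr X|^2+\kappa_-\Tr(X^\dagger X)$ coming from eq. (\ref{kap})) gives
\[
\|u_{jj}\|^2=\kappa_++\kappa_-,\qquad \|u_{jj'}\|^2=\kappa_-\quad (j\neq j').
\]
The same values hold for the $v$'s. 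Therefore
\[
\|\mathcal{B}\|_{\tr}\leq(\kappa_++\kappa_-)\sum_j\lambda_j^2+\kappa_-\sum_{j\neq j'}\lambda_j\lambda_{j'}=\kappa_++\kappa_-\Big(\sum_j\lambda_j\Big)^2.
\]
Cauchy--Schwarz gives $\big(\sum_{j=0}^{r-1}\lambda_j\big)^2\leq r$, which yields $\|\mathcal{B}\|_{\tr}\leq\kappa_++r\kappa_-$. Taking the convex combination over $j$ completes the argument.

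The main obstacle is the rank-one splitting in the middle step. One must check that the diagonal terms $j=j'$ contribute $\kappa_+$ only once, weighted by $\sum_j\lambda_j^2=1$, rather than acquiring a factor $r$ from the triangle inequality. If the split over all $r^2$ pairs turns out too lossy, I would first group the diagonal terms as a single block $\sum_j\lambda_j^2u_{jj}v_{jj}^\dagger$ and bound its trace norm via Cauchy--Schwarz, $\sum_j\lambda_j^2\|u_{jj}\|\|v_{jj}\|$, which gives the same result. A secondary point is to fix the index conventions (transpose versus conjugate on the second factor) so that $v_{jj'}$ is genuinely of the form $\Tr(R_{\beta,\ell}Y)$ for a rank-one $Y$, since only then does Proposition \ref{XIOC} apply.
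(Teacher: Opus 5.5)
Your proof is correct and follows the same route as the paper, which defers to Theorem 5 of the cited reference. In both, convexity of the trace norm reduces the claim to pure states, $\mathcal{B}$ is split over the Schmidt decomposition into rank-one pieces, their norms are evaluated with the conical 2-design identity $\sum_{\alpha,k}|\Tr(R_{\alpha,k}X)|^2=\kappa_+|\Tr X|^2+\kappa_-\Tr(X^\dagger X)$ (you are right that $+\kappa_-$ is the correct sign), and the proof ends with $\sum_j\lambda_j\leq\sqrt{r}$. The obstacle you flag does not arise: the triangle-inequality bound already equals $\kappa_+\sum_j\lambda_j^2+\kappa_-(\sum_j\lambda_j)^2$, so no regrouping of the diagonal terms is needed.
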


The proof is identical to that of Theorem 5 in ref. \cite{GEAM_coherence}, because it is only based on the properties of conical 2-designs. From the convexity of the trace norm, it is enough to calculate $\|\mathcal{B}(\rho)\|_{\tr}$ for pure quantum states. The upper bound is obtained by using the property $\sum_{j=1}^r\lambda_j\leq\sqrt{r}$ \cite{Terhal} for the corresponding Schmidt coefficients. In particular, for $r=1$, we obtain the necessary separability criterion
\begin{equation}
\|\mathcal{B}(\rho)\|_{\tr}\leq \kappa_++\kappa_-.
\end{equation}
Our results generalize those obtained for SIC POVMs and mutually unbiased bases \cite{Morelli,ESIC}, general SIC POVMs \cite{Wangs,gESIC}, $(N,M)$-POVMs \cite{Schmidt_NM,SIC-MUB}, and generalized equiangular measurements \cite{GEAM_coherence}.

\section{Conclusions}

In this paper, we consider generalized equiangular weighted frames (GEWFs) with weights given by operator traces. We characterize these objects in terms of trace properties between operator pairs. From definition, it follows that every GEWF consists of linearly independent operators. If there are $M=d^2$ elements, then that GEWF is informationally complete. Next, we propose four construction methods from a single Hermitian orthonormal operator basis. For each method, we give an example in the maximal set of qubit operators. All this is then repeated for families of mutually unbiased (MU) GEWFs, which can form informationally overcomplete sets. Here, we also show relations between GEWF operators that come from different constructions. In what follows, general MU GEWFs are then compared with other important objects, such as mutually unbiased (MU) generalized equiangular tight frames (GETFs), conical 2-designs, and POVMs. It turns out that MU GEWFs are tight if and only if elements of every GEWF are of the same trace. Moreover, we establish an equivalence relation between subclasses of MU GEWFs and conical 2-designs. Finally, it turns out that any MU GEWFs can be used to define a POVM, but this POVM is in general not a conical 2-design. In terms of applications, we find the upper bound for the index of coincidence and prove a more general inequality for any linear operators. Then, we show that squared elements of MU GEWFs also provide a resolution of the identity operator, which in turn helps to simplify the formula for variance. We also use MU GEWFs to detect and quantify quantum entanglement. In particular, we formulate necessary conditions for the Schmidt number detection in terms of the correlation matrix.

In further research, it is interesting to analyze the properties of MU GEWFs in more depth, like conditions for their elements to all be projectors of a given rank. The next step is to establish their advantages for the applicational purposes in quantum tomography, quantum steering, entropic uncertainty relations, quantum coherence, quantum concurrence, and more.

\section{Acknowledgements}

This research was funded in whole or in part by the National Science Centre, Poland, Grant number 2025/58/E/ST2/00336. For the purpose of Open Access, the author has applied a CC-BY public copyright licence to any Author Accepted Manuscript (AAM) version arising from this submission.

\bibliography{C:/Users/cynda/OneDrive/Fizyka/bibliography}
\bibliographystyle{unsrt}

\end{document}